\newcommand{\hq }{{/\kern -.185em/}}
\def\dim{\operatorname{dim}}
\theoremstyle{plain}
\newtheorem{theorem} {Theorem}
\newtheorem{prop}[theorem]{Proposition}
\newtheorem{cor} [theorem]{Corollary}
\theoremstyle{definition}
\newtheorem{example} {Example}
\newtheorem{fact}{Fact}
\newcommand{\ket}[1]{\mbox{$| #1 \rangle$}}
\newcommand{\kb}[2]{\ensuremath{| #1 \rangle\!\langle #2 |}}
\begin{document}

\title{Geometry and topology of CC and CQ states}

\author{Micha{\l} Oszmaniec}
\affiliation{Center for Theoretical Physics, Polish Academy of
Sciences, Al. Lotnik\'ow 32/46, 02-668 Warszawa, Poland}

\author{Piotr Suwara}
\affiliation{College of Inter-Faculty Individual Studies in Mathematics
and Natural Science, University of Warsaw, ul. \.{Z}wirki i Wigury
93, 02-089 Warszawa, Poland}

\author{Adam Sawicki}
\affiliation{Center for Theoretical Physics, Polish Academy of
Sciences, Al. Lotnik\'ow 32/46, 02-668 Warszawa, Poland}
\affiliation{School of Mathematics, University of Bristol, University Walk,
Bristol BS8 1TW, U.K.}

\begin{abstract}
We show that mixed bipartite $CC$ and $CQ$ states are geometrically
and topologically distinguished in the space of states. They are characterized
by non-vanishing Euler-Poincar\'{e} characteristics on the topological
side and by the existence of symplectic structures
on the geometric side.
\end{abstract}


\maketitle


\section{Introduction}

The existence of quantum correlations for multipartite separable mixed
states can be regarded as one of the most interesting quantum information
discoveries of the last decade. In 2001 Ollivier and \.{Z}urek \cite{Discord Zurek}
(see also \cite{Discord Verdal}) introduced the notion of quantum
discord as a measure of the quantumness of correlation. Quantum discord
is always non-negative \cite{Nullity of quantum discord}. The states
with vanishing quantum discord are called {\it pointer states}. They
form the boundary between classical and quantum correlations \cite{Nullity of quantum discord}.
The bipartite pointer states can be identified with so-called classical-quantum,
$CQ$ states \cite{Nullity of quantum discord}. An important subclass
of CQ states are classical-classical, CC states which play an important
role in quantification of the quantum correlations \cite{Horo 1,Horo 2}
and were recently considered in the context of broadcasting scenarios
\cite{Korbicz CC and CQ,chusc}. It is known that both classes are
of measure zero in the space of density matrices \cite{Acin almost all}.

In this paper we focus on the symplecto-geometric and topological
characterizations of mixed bipartite $CC$ and $CQ$ states. In [\onlinecite{Symplectic geometry of entanglement}]
it was shown that pure separable states form the unique symplectic
orbit of the local unitary (LU) group action. All other LU action
orbits are non-symplectic. Moreover, the more non-symplectic is a
LU action orbit the more entangled are states belonging to it \cite{Symplectic geometry of entanglement,HKS12}.

As we show in the present work, the symplectic techniques can be also
applied to describe mixed states. Remarkably, the non-degenerate symplectic
structure is present on a generic local unitary orbit through $CC$
and $CQ$ states rather than separable states. More precisely, for
density matrices defined on ${\mathcal{H}=\mathbb{C}^{N_{1}}\otimes\mathbb{C}^{N_{2}}}$
we show that the closure of all symplectic ${SU(N_{1})\times SU(N_{2})}$-orbits
is exactly the set of $CC$ states. Similarly, the closure of all
symplectic ${SU(N_{1})\times I_{N_{2}}}$-orbits gives the set of $CQ$
states. This clearly indicates that symplecticity generically detects
non-quantum rather than non-entangled states. For pure states two
concepts overlap. 

We also provide the topological characterization of pure seprarable,
$CC$ and $CQ$ states. More precisely, for pure $L$-partite and
bipartite mixed states we study Euler-Poincar\'{e} characteristics,
$\chi$ on LU action orbits. Using Hopf-Samelson theorem \cite{Hopf Samelson}
we show that for pure states $\chi$ is non-zero exactly on the manifold
of separable states. Moreover, for bipartite mixed states, ${SU(N_{1})\times SU(N_{2})}$-orbit
has non-vanishing Euler-Poincar\'{e} characteristics if and only
if the states belonging to it are $CC$. Similar result is true for
${SU(N_{1})\times I_{N_{2}}}$-orbits and $CQ$ states. As a conclusion
separable, $CC$, and $CQ$ states are topologically distinguished.

The paper is organized as follows. In Section \ref{sec: coadjoint orbits}
we discuss the relevant geometric structures present on the manifold
of bipartite isospectral density matrices, $\mathcal{O}_{\rho}$.
The orbits of $SU(N_{1})\times SU(N_{2})$ and ${SU(N_{1})\times I_{N_{2}}}$
are in a natural way homogenous submanifolds of this manifold. In
Section \ref{sec:Restrictions-of-geometric} we discuss the restriction
of the geometric structures to arbitrary homogenous submanifolds of
$\mathcal{O}_{\rho}$. In Section \ref{sec:geometry of mixed states}
we show how symplectic and K\"{a}hler structures distinguish classes
of orbits trough $CC$ and $CQ$ states. The second part of the article
deals with the topological characterization of these orbits. Section
\ref{sec:topology} discusses the Hopf-Samelson theorem for calculation
of the Euler-Poincar\'{e} characteristic of homogenous spaces. In
subsequent Section \ref{sec:Topol_results} we compute the Euler-Poincar\'{e}
characteristic of orbits of pertinent groups through pure separable,
$CC$, and $CQ$ states and show that these are the only orbits with
non-zero Euler-Poincar\'{e} characteristic.

By Facts we always denote results that are known. We do not present
their proofs and refer the reader to the literature. On the other
hand, by Propositions and Corollaries we denote all new results. Their
proofs are included in the text.

\section{Geometric structures on the manifold of isospectral density matrices
\label{sec: coadjoint orbits}}

A bipartite density matrix is a non-negative, trace-one operator $\rho$
on ${\mathcal{H}=\mathbb{C}^{N_{1}}\otimes\mathbb{C}^{N_{2}}}$, i.e.
an ${N\times N}$ matrix, ${N=N_{1}N_{2}}$, whose spectrum ${\sigma(\rho)=\{p_{1},\ldots,p_{N}\}}$
consists of non-negative eigenvalues ${p_{i}\geq0}$ satisfying ${\sum_{i=1}^{N}\,p_{i}=1}$.
Two density matrices are {\it isospectral} if they have the same
spectra. In the following we discuss geometric structures present
on the set of isospectral density matrices. In particular, we show
that this set is a compact K\"{a}hler manifold, that is, there exist
mutually compatible symplectic, Riemannian and complex structures
on it.

Let $\rho_{0}$ be a diagonal bipartite density matrix. The density
matrices which are isospectral with $\rho_{0}$ form an adjoint orbit
through $\rho_{0}$, $\mathcal{O}_{\rho_{0}}$, of ${G=SU(N)}$ action

\[
\mathcal{O}_{\rho_{0}}=\left\{ \mathrm{Ad}_{g}(\rho_{0}):\, g\in G\right\} \,,\,\mathrm{Ad}_{g}(\rho_{0})=g\rho_{0}g^{-1}\,.
\]
The Lie algebra $\mathfrak{g}$ of group $G$, i.e. the space of anti-hermitian
${N\times N}$ traceless matrices is equipped with the $G$-invariant
inner product defined by

\begin{equation}
\left(X\,|\, Y\right)=-\mathrm{tr}(XY),\,\,\left(gX\,|\, gY\right)=\left(X\,|\, Y\right)\,,\, g\in G\,.\label{eq:inner}
\end{equation}
For ${\rho\in\mathcal{O}_{\rho_{0}}}$ let ${G_{\rho}=\{g\in G:\,\mathrm{Ad}_{g}(\rho)=\rho\}}$
be the stabilizer of $\rho$ and ${\mathfrak{g}_{\rho}=\{X\in\mathfrak{g}:\,[X,\rho]=0\}}$
its Lie algebra. The geometric structures we want to discuss are defined
on the tangent bundle of $\mathcal{O}_{\rho_{0}}$, ${T\mathcal{O}_{\rho_{0}}=\bigcup_{\rho\in\mathcal{O}_{\rho_{0}}}T_{\rho}\mathcal{O}_{\rho_{0}}}$.
Thus we first need to describe ${T_{\rho}\mathcal{O}_{\rho_{0}}}$,
the tangent space to $\mathcal{O}_{\rho_{0}}$ at any ${\rho\in\mathcal{O}_{\rho_{0}}}$.
To this end, for ${X\in\mathfrak{g}}$ consider the corresponding fundamental
vector field $\tilde{X}$

\begin{equation}
\tilde{X}_{\rho}=\left.\frac{d}{dt}\right|_{t=0}e^{tX}\rho e^{-tX}=[X,\rho]\,.\label{eq:tangent-vector}
\end{equation}
As the action of $G$ on $\mathcal{O}_{\rho_{0}}$ is transitive,
the fundamental vector fields at ${\rho\in\mathcal{O}_{\rho_{0}}}$
span ${T_{\rho}\mathcal{O}_{\rho_{0}}}$. Note that for ${X\in\mathfrak{g}_{\rho}}$
the corresponding fundamental vector field vanishes, ${\tilde{X}_{\rho}=0}$.
Therefore, the tangent space ${T_{\rho}\mathcal{O}_{\rho_{0}}}$ can
be identified with $\mathfrak{g}_{\rho}^{\bot}$, that is, with the
orthogonal complement with respect to the inner product (\ref{eq:inner})
of $\mathfrak{g}_{\rho}$, ${T_{\rho}\mathcal{O}_{\rho_{0}}\simeq\mathfrak{g}_{\rho}^{\bot}}$
\cite{Rieffel}.

\subsection*{Symplectic structure on $\mathcal{O}_{\rho_{0}}$}

The symplectic form on $\mathcal{O}_{\rho_{0}}$ is given by the Kirillov-Kostant-Souriau
(KKS) form. This is a $2$-form which acts on the tangent space ${T_{\rho}\mathcal{O}_{\rho_{0}}}$
to $\mathcal{O}_{\rho_{0}}$ at any ${\rho\in\mathcal{O}_{\rho_{0}}}$.
Using ${T_{\rho}\mathcal{O}_{\rho_{0}}\simeq\mathfrak{g}_{\rho}^{\bot}}$
it is defined by
\begin{equation}
\omega_{\rho}(\tilde{X}_{\rho},\tilde{Y}_{\rho})=\left(i\rho\,|\,\left[Y,\, X\right]\right)=\left(Y\,|\,\left[X,\, i\rho\right]\right)=\left(X\,|\,\left[i\rho,\, Y\right]\right)\,,\label{eq:KKS}
\end{equation}
where, ${X,Y\in\mathfrak{g}_{\rho}^{\bot}}$ and ${i^{2}=-1}$ ensures
that $i\rho$ is antihermitian and ${\omega_{\rho}(\tilde{X}_{\rho},\tilde{Y}_{\rho})}$
has real value. Clearly when ${X\in\mathfrak{g}_{\rho}}$ or ${Y\in\mathfrak{g}_{\rho}}$
we have ${\omega_{\rho}(\tilde{X}_{\rho},\tilde{Y}_{\rho})=0}$ which
means that indeed $\omega_{\rho}$ is defined on the tangent space
${T_{\rho}\mathcal{O}_{\rho_{0}}}$. One can also check that $\omega$
is closed and non-degenerate. Therefore $\omega$ defines a symplectic
structure on $\mathcal{O}_{\rho_{0}}$. Moreover, group $G$ acts
on $\mathcal{O}_{\rho_{0}}$ in a symplectic way, i.e. ${g_{\ast}\omega=\omega}$,
where $g_{\ast}\omega$ denotes the pullback of $\omega$ by the action
of $g\in G$.

\subsection*{Complex structure on $\mathcal{O}_{\rho_{0}}$}

Having the KKS symplectic form (\ref{eq:KKS}) on $\mathcal{O}_{\rho_{0}}$
there exists a natural almost complex structure associated to it.
It is defined as follows. For ${\rho\in\mathcal{O}_{\rho_{0}}}$ we
compute the polar decomposition of the map ${\mathrm{ad}_{\rho}:\mathfrak{g}\rightarrow\mathfrak{g}}$,
${\mathrm{ad}_{\rho}(X)=\left[i\rho,\, X\right]}$, restricted to ${T_{\rho}\mathcal{O}_{\rho_{0}}\simeq\mathfrak{g}_{\rho}^{\perp}}$.
It is straightforward to see that this restriction is non-degenerate
and that it defines a skew-symmetric operator (with respect to inner
product (\ref{eq:inner})) , ${\mathrm{ad}_{\rho}^{\ast}=-\mathrm{ad}_{\rho}}$.
Therefore the polar decomposition reads

\begin{equation}
\left.\mathrm{ad}_{\rho}\right|_{\mathfrak{g}_{\rho}^{\perp}}=J_{\rho}P_{\rho}\,,\label{eq:adrho}
\end{equation}
where ${P_{\rho}:\mathfrak{g}\rightarrow\mathfrak{g}}$ is a positive
operator, ${J_{\rho}:\mathfrak{g}\rightarrow\mathfrak{g}}$ is orthogonal
and skew-symmetric, ${J_{\rho}^{\ast}=-J_{\rho}}$ and ${\left[J_{\rho},\, P_{\rho}\right]=0}$.
It follows that ${J_{\rho}^{2}=-\mathrm{I}}$. Therefore $J_{\rho}$
can be used to define almost complex structure on $\mathcal{O}_{\rho_{0}}$.
In fact $J_{\rho}$ turns out to be integrable and consequently it
defines the complex structure on $\mathcal{O}_{\rho_{0}}$ \cite{Rieffel}.

\subsection*{Riemannian and K\"{a}hler structures on $\mathcal{O}_{\rho_{0}}$}

The last structure on $\mathcal{O}_{\rho_{0}}$ is the Riemannian
structure that is compatible with $\omega$ and $J$ introduced above.
It is given by the following formula

\begin{equation}
g_{\rho}\left(\tilde{X}_{\rho},\,\tilde{Y}_{\rho}\right)=\omega_{\rho}(\tilde{X}_{\rho},J_{\rho}\tilde{Y}_{\rho})=\left(X\,|\,\left[i\rho,\, J_{\rho}Y\right]\right)\,,\label{eq:metric-sympl}
\end{equation}
for ${\rho\in\mathcal{O}_{\rho_{0}}}$ and ${X,\, Y\in\mathfrak{g}_{\rho}^{\perp}}$.
One easily checks that so defined $g$ is symmetric, positive definite
and $G$-invariant. Moreover, straightforward computation shows that
it is compatible with both $\omega$ and $J$, i.e.

\[
g_{\rho}\left(J_{\rho}\tilde{X}_{\rho},\, J_{\rho}\tilde{Y}_{\rho}\right)=g_{\rho}\left(\tilde{X}_{\rho},\,\tilde{Y}_{\rho}\right),\,\,\, g_{\rho}\left(J_{\rho}\tilde{X}_{\rho},\,\tilde{Y}_{\rho}\right)=\omega_{\rho}(\tilde{X}_{\rho},\,\tilde{Y}_{\rho})\,.
\]
Thus structures $\omega$, $J$ and $g$ define K\"{a}hler structure
on $\mathcal{O}_{\rho_{0}}$. Due to the positive-definiteness of
$g_{\rho}$, $\mathcal{O}_{\rho_{0}}$ is a positive K\"{a}hler manifold
\cite{Symplectic geometry of entanglement}.

\section{Restrictions of geometric structures\label{sec:Restrictions-of-geometric}}

Having defined the relevant geometric structures on $\mathcal{O}_{\rho_{0}}$
we consider the following problem. Let $K$ be a compact semisimple
Lie subgroup of $G$, $K\subset G$. By restriction of the adjoint
action $K$ acts on $\mathcal{O}_{\rho_{0}}$ in a symplectic way.
We denote $K$-orbit through ${\rho\in\mathcal{O}_{\rho_{0}}}$ by $K.\rho$.
Obviously ${K.\rho\subset\mathcal{O}_{\rho_{0}}}$. One can thus consider
the restriction $\omega|_{K.\rho}$ of the symplectic form (\ref{eq:KKS})
to $K.\rho$. The restricted form is still closed, $d\omega|_{K.\rho}=0$,
but it need not to be non-degenerate. As a result, $K$-orbits in
$\mathcal{O}_{\rho}$ need not to be symplectic. We want to know which
of them are. Moreover, we want to know if those which are symplectic
are also K\"{a}hler. Before we state the relevant theorems we review
some background information concerning semisimple Lie algebras. For
more detailed account of this topic consult \cite{Hall}.

\subsection*{Root decomposition of a compact semisimple Lie algebra}

Let $\mathfrak{k}$ be the Lie algebra of $K$. As $K$ is a compact
semisimple Lie group, the algebra $\mathfrak{k}$ has the following
{\it root decomposition} \cite{Hall}

\begin{equation}
\mathfrak{k}=\mathfrak{t}\oplus\bigoplus_{\alpha>0}\mathrm{Span}_{\mathbb{R}}\left(E_{\alpha}-E_{-\alpha}\right)\oplus\bigoplus_{\alpha>0}\mathrm{Span}_{\mathbb{R}}\left(i(E_{\alpha}+E_{-\alpha})\right)\,,\label{eq:root-decomposition}
\end{equation}
where, $\mathfrak{t}$ is a Cartan subalgebra of $\mathfrak{k}$ and
$\alpha$ ranges over all positive roots. The Cartan subalgebra $\mathfrak{t}$
is

\[
\mathfrak{t}=\mathrm{Span}_{\mathbb{R}}\left(iH_{\alpha}:\, H_{\alpha}=[E_{\alpha},\, E_{-\alpha}],\,\alpha>0\right).
\]
Moreover, for each positive root $\alpha$ the triple ${\left\{ E_{\alpha}-E_{-\alpha},\, i(E_{\alpha}+E_{-\alpha}),\, iH_{\alpha}\right\} }$
is isomorphic with $\mathfrak{su}(2)$, i.e.
\begin{eqnarray*}
\left[iH_{\alpha},\, E_{\alpha}-E_{-\alpha}\right] & = & 2i\left(E_{\alpha}+E_{-\alpha}\right)\,,\\
\left[iH_{\alpha},\, i(E_{\alpha}+E_{-\alpha})\right] & = & -2\left(E_{\alpha}-E_{-\alpha}\right)\,,\\
\left[E_{\alpha}-E_{-\alpha},\, i(E_{\alpha}+E_{-\alpha})\right] & = & 2iH_{\alpha}\,.
\end{eqnarray*}

\begin{example}
For ${\mathfrak{k}=\mathfrak{su}(N)}$ the root space decomposition
is particularly simple
\begin{equation}
\mathfrak{su}\left(N\right)=\mathfrak{t}\oplus\bigoplus_{i>j}\mathrm{span}_{\mathbb{R}}\left(X_{ij}\right)\oplus\bigoplus_{i>j}\mathrm{span}_{\mathbb{R}}\left(Y_{ij}\right)\,,\label{eq:root-su(n)}
\end{equation}
where

\[
\mathfrak{t}=\{X\in\mathfrak{k}:\, X-\mbox{diagonal}\}=\mathrm{Span}_{\mathbb{R}}\left(iH_{ij}\right),\, H_{ij}=\left(\kb ii-\kb jj\right)
\]
\begin{equation}
Y_{ij}=\left(\kb ij-\kb ji\right),\,\, X_{ij}=i\left(\kb ij+\kb ji\right),\label{eq:HXY}
\end{equation}
and ${i,j\in\left\{ 1,\ldots,N\right\} }$.
\end{example}

\subsection*{Kostant-Sternberg theorem}

Symplectic orbits ${K.\rho\subset\mathcal{O}_{\rho_{o}}}$ are characterized
by the Kostant-Sternberg theorem \cite{sternberg} (see also \cite{Chruscinski}).
The necessary condition for the orbit ${K.\rho\subset\mathcal{O}_{\rho_{o}}}$
to be symplectic is
\begin{fact}
\label{torus} (The necessary condition) If the orbit ${K.\rho\subset\mathcal{O}_{\rho_{o}}}$
is symplectic (with respect to the restriction of KKS symplectic form
(\ref{eq:KKS})) then there exists ${\tilde{\rho}\in K.\rho}$ such
that ${\left[X,\,\tilde{\rho}\right]=0}$ for all ${X\in\mathfrak{t}}$,
where $\mathfrak{t}$ is a Cartan subalgebra of $\mathfrak{k}$.
\end{fact}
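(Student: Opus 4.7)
The plan is to treat the restricted $K$-action on $\mathcal{O}_{\rho_0}$ as a Hamiltonian action and to produce $\tilde\rho$ as a fixed point of a generic one-parameter subgroup of a maximal torus. Identifying $\mathfrak{g}^{*}\cong\mathfrak{g}$ via the invariant inner product (\ref{eq:inner}), for every $X\in\mathfrak{k}$ the linear function $f_X(\sigma)=(i\sigma\,|\,X)$ on $\mathcal{O}_{\rho_0}$ should be a Hamiltonian for the fundamental vector field $\tilde X$. A direct computation using the KKS formula (\ref{eq:KKS}) together with $[i\rho,Y]=-i[Y,\rho]$ yields $(df_X)_\rho(\tilde Y_\rho)=(i[Y,\rho]\,|\,X)=-([i\rho,Y]\,|\,X)=-\omega_\rho(\tilde X_\rho,\tilde Y_\rho)$, i.e.\ $df_X=-\iota_{\tilde X}\omega$. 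Pulling back along the inclusion $K.\rho\hookrightarrow\mathcal{O}_{\rho_0}$, this identity persists with $\omega$ replaced by $\omega|_{K.\rho}$, because $\tilde X$ is tangent to $K.\rho$ for every $X\in\mathfrak{k}$.

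Next, I would choose a \emph{generic} element $X_0\in\mathfrak{t}$, namely one whose one-parameter subgroup $\{\exp(tX_0)\}_{t\in\mathbb{R}}$ is dense in the maximal torus $T\subset K$ with Lie algebra $\mathfrak{t}$. Such an $X_0$ exists by standard Lie theory: writing $X_0$ in a lattice basis of $\mathfrak{t}$, it is enough to require the coordinates to be rationally independent, which holds on a dense subset of $\mathfrak{t}$.

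Since $K.\rho$ is compact as the continuous image of $K$, the restriction $f_{X_0}|_{K.\rho}$ attains its minimum at some $\tilde\rho\in K.\rho$. At that minimum $df_{X_0}$ vanishes on $T_{\tilde\rho}(K.\rho)$, which by the identity above translates into $\omega_{\tilde\rho}(\tilde X_{0,\tilde\rho},\tilde Y_{\tilde\rho})=0$ for every $Y\in\mathfrak{k}$. Here the hypothesis enters: non-degeneracy of $\omega|_{K.\rho}$ forces $\tilde X_{0,\tilde\rho}=[X_0,\tilde\rho]=0$, so $\exp(tX_0)\tilde\rho\exp(-tX_0)=\tilde\rho$ for all $t\in\mathbb{R}$. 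Because $\{\exp(tX_0)\}_t$ is dense in $T$ and the stabilizer of $\tilde\rho$ in $K$ is closed, the whole torus $T$ fixes $\tilde\rho$, hence $[Y,\tilde\rho]=0$ for every $Y\in\mathfrak{t}$, as required.

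The main obstacle is really just carefully verifying the moment-map identity $df_X=-\iota_{\tilde X}\omega$ and checking that it descends cleanly to $K.\rho$; once this is in hand, the argument collapses to the classical observation that a Hamiltonian torus action on a compact symplectic manifold automatically has fixed points. The symplectic hypothesis is used exactly once — to pass from the vanishing of $\iota_{\tilde X_0}(\omega|_{K.\rho})$ at $\tilde\rho$ to the vanishing of the vector field $\tilde X_0$ itself at $\tilde\rho$ — which is also the reason the conclusion would fail for a merely presymplectic orbit.
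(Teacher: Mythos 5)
Your proof is correct. Note that the paper does not actually prove this statement: it is presented as a Fact, i.e.\ a known result quoted from the literature (the Kostant--Sternberg theorem, deferred to \cite{sternberg}), so there is no in-paper argument to compare against. What you wrote is precisely the standard argument from that source: the Hamiltonian identity $df_X=-\iota_{\tilde X}\omega$ for the KKS form (your sign computation checks out against (\ref{eq:KKS}) and the invariance of the inner product (\ref{eq:inner})), a critical point $\tilde\rho$ of $f_{X_0}$ on the compact orbit $K.\rho$ for a topological generator $X_0$ of the maximal torus, non-degeneracy of $\omega|_{K.\rho}$ forcing $[X_0,\tilde\rho]=0$, and density of $\{\exp(tX_0)\}$ in $T$ together with closedness of the stabilizer upgrading this to $[\mathfrak{t},\tilde\rho]=0$. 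In short, you have correctly supplied the proof that the paper delegates to its references, and you correctly isolate the single place where symplecticity (as opposed to mere presymplecticity) of the orbit is used.
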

\noindent In the following we assume that the necessary condition
is satisfied, that is, ${\left[\tilde{\rho},\,\mathfrak{t}\right]=0}$.
Using (\ref{eq:tangent-vector}) and the root decomposition (\ref{eq:root-decomposition})
we have

\[
T_{\tilde{\rho}}K.\rho=\bigcup_{\alpha>0}P_{\alpha}\,,
\]
where the sum is over positive roots and
\begin{equation}
P_{\alpha}=\mathrm{Span}_{\mathbb{R}}\left(\left[E_{\alpha}-E_{-\alpha},\,\tilde{\rho}\right],\, i\left[E_{\alpha}+E_{-\alpha},\,\tilde{\rho}\right]\right)\,.\label{eq:P-alpha}
\end{equation}
We will need the following fact whose proof can be found in \cite{Symplectic geometry of entanglement,sternberg}.
\begin{fact}
\label{omega ortogonal}For positive roots ${\alpha\neq\beta}$, ${\omega_{\rho}\left(X,\, Y\right)=0}$
if ${X\in P_{\alpha},}\, {Y\in P_{\beta}}$.
\end{fact}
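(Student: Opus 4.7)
My plan is to expand the bracket $[Y_\beta, X_\alpha]$ into root-space components and then use $\mathrm{ad}$-invariance of the inner product together with the hypothesis $[\tilde\rho,\mathfrak{t}]=0$ to show each component is orthogonal to $i\tilde\rho$.

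First I would pass to the complexification $\mathfrak{k}_{\mathbb{C}}$. A general vector $X\in P_\alpha$ is the fundamental vector field of some $X_\alpha\in\mathrm{Span}_{\mathbb{R}}(E_\alpha-E_{-\alpha},\, i(E_\alpha+E_{-\alpha}))$, and similarly $Y=\tilde Y_{\tilde\rho}$ for $Y_\beta$ in the analogous real span. By the KKS formula (\ref{eq:KKS}) we have
\[
\omega_{\tilde\rho}(X,Y)=\bigl(i\tilde\rho\,\bigl|\,[Y_\beta,X_\alpha]\bigr).
\]
Expanding $[Y_\beta,X_\alpha]$ bilinearly produces a sum of terms of the form $[E_{\pm\alpha},E_{\pm\beta}]$; each such bracket lies in the root space $\mathfrak{g}_{\pm\alpha\pm\beta}$ of $\mathfrak{k}_{\mathbb{C}}$ (or is zero when $\pm\alpha\pm\beta$ is not a root). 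Since $\alpha,\beta>0$ and $\alpha\neq\beta$, none of the four combinations $\pm\alpha\pm\beta$ vanishes, so every nonzero term belongs to a root space for a nonzero weight $\gamma\in\mathfrak{t}^\ast$.

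The key step is then the following orthogonality argument. For any $H\in\mathfrak{t}$ and any root vector $Z_\gamma\in\mathfrak{g}_\gamma$, $\mathrm{ad}$-invariance of the inner product (\ref{eq:inner}) gives
\[
([H,i\tilde\rho]\,|\,Z_\gamma)+(i\tilde\rho\,|\,[H,Z_\gamma])=0.
\]
By assumption the first term vanishes, while $[H,Z_\gamma]=\gamma(H)\,Z_\gamma$. Choosing $H\in\mathfrak{t}$ with $\gamma(H)\neq 0$ (possible because $\gamma\neq 0$) yields $(i\tilde\rho\,|\,Z_\gamma)=0$. Extending this by sesquilinearity from $\mathfrak{g}_{\mathbb{C}}$ back to the real pairing on $\mathfrak{g}$, every term appearing in the expansion of $[Y_\beta,X_\alpha]$ is orthogonal to $i\tilde\rho$, so $\omega_{\tilde\rho}(X,Y)=0$.

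The only subtle point is making sure the real/complex bookkeeping is airtight: $i\tilde\rho\in\mathfrak{g}$ is anti-Hermitian, whereas the $E_\alpha$ are complex combinations, so I would verify that the pairing $(\cdot\,|\,\cdot)$ extends $\mathbb{C}$-bilinearly to $\mathfrak{g}_{\mathbb{C}}$ and that the resulting weight-space orthogonality descends to the real inner product on the real spans in (\ref{eq:P-alpha}). With this book-keeping under control, the argument is essentially a one-line consequence of the invariance identity above.
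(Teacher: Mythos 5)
Your argument is correct. Note that the paper does not prove this statement at all—it is listed as a Fact with the proof deferred to the cited references (Guillemin--Sternberg and Sawicki--Huckleberry--Ku\'s)—and your root-space expansion combined with $\mathrm{ad}$-invariance of the trace form is precisely the standard argument found there: since $[\mathfrak{t},\tilde\rho]=0$, the functional $Z\mapsto(i\tilde\rho\,|\,Z)$ annihilates every root space $\mathfrak{g}_\gamma$ with $\gamma\neq0$, and for $\alpha\neq\beta$ all components of $[Y_\beta,X_\alpha]$ lie in such spaces because $\pm\alpha\pm\beta\neq0$. Your bookkeeping concerns are easily settled: the form $(X\,|\,Y)=-\mathrm{tr}(XY)$ extends $\mathbb{C}$-bilinearly and $\mathrm{ad}$-invariantly to the complexification by cyclicity of the trace, and the value $\left(i\tilde\rho\,|\,[Y_\beta,X_\alpha]\right)$ is unchanged if the representatives $X_\alpha,Y_\beta$ are shifted by elements of $\mathfrak{g}_{\tilde\rho}$, so working with the real root spans rather than with $\mathfrak{g}_{\tilde\rho}^{\perp}$ causes no harm.
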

\noindent Thus ${\left.\omega\right|_{K.\rho}}$ is non-degenerate
if and only if it is non-degenerate on each $P_{\alpha}$ separately.
Using (\ref{eq:KKS}) and (\ref{eq:P-alpha}) it is straightforward
to check:
\begin{fact}
\label{counting-1} Assume that ${\left[\tilde{\rho},\mathfrak{t}\right]=0}$.
Then for any $P_{\alpha}$ we have exactly three possibilities: (1)
${\dim P_{\alpha}=0}$, if and only if ${\left[E_{\alpha},\,\tilde{\rho}\right]=0=\left[E_{-\alpha},\,\tilde{\rho}\right]}$,
(2) ${\dim P_{\alpha}=2}$ and ${\omega|_{P_{\alpha}}=0}$, if and only
if ${\mathrm{tr}\left(\rho H_{\alpha}\right)=0}$ and ${\left[E_{\alpha},\,\tilde{\rho}\right]\neq0}$
or ${\left[E_{-\alpha},\,\tilde{\rho}\right]\neq0}$, (3) ${\dim P_{\alpha}=2}$
and $\omega|_{P_{\alpha}}$ is non-degenerate, if and only if ${\mathrm{tr}\left(\rho H_{\alpha}\right)\neq0}$\textsl{.}
\end{fact}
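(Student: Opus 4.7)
The plan is to reduce the claim to a single scalar computation on the two-dimensional subspace $P_\alpha$, using the $\mathfrak{su}(2)$-type relations from the root decomposition together with a weight-space argument for the dimension count. Throughout I write $\tilde\rho$ for the chosen representative and set $A_\alpha = E_\alpha - E_{-\alpha}$, $B_\alpha = i(E_\alpha + E_{-\alpha})$, so that $P_\alpha = \mathrm{Span}_{\mathbb R}([A_\alpha,\tilde\rho],[B_\alpha,\tilde\rho])$.

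First I would evaluate the KKS form on these two generators. Formula (\ref{eq:KKS}) gives $\omega_{\tilde\rho}(\tilde A_{\alpha,\tilde\rho},\tilde B_{\alpha,\tilde\rho}) = (i\tilde\rho\,|\,[B_\alpha, A_\alpha])$, and the third $\mathfrak{su}(2)$-type commutation relation in the excerpt gives $[A_\alpha,B_\alpha]=2iH_\alpha$, so $[B_\alpha,A_\alpha]=-2iH_\alpha$. Substituting into the inner product (\ref{eq:inner}) yields the clean identity
\[
\omega_{\tilde\rho}(\tilde A_{\alpha,\tilde\rho},\tilde B_{\alpha,\tilde\rho}) = -2\,\mathrm{tr}(\tilde\rho H_\alpha).
\]
By antisymmetry, this single scalar determines $\omega|_{P_\alpha}$ entirely.

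Next I would show the dichotomy $\dim P_\alpha\in\{0,2\}$. The assumption $[\tilde\rho,\mathfrak t]=0$ and the Jacobi identity give, for any $H\in\mathfrak t$, $[H,[E_{\pm\alpha},\tilde\rho]]=[[H,E_{\pm\alpha}],\tilde\rho]=\pm\alpha(H)[E_{\pm\alpha},\tilde\rho]$, so $[E_\alpha,\tilde\rho]$ and $[E_{-\alpha},\tilde\rho]$ lie in the $\mathrm{ad}_{\mathfrak t}$-weight spaces of $\mathfrak g_{\mathbb C}$ for the distinct weights $+\alpha$ and $-\alpha$. Moreover, the compact convention $E_{-\alpha}=E_\alpha^\dagger$ gives $[E_{-\alpha},\tilde\rho]=-[E_\alpha,\tilde\rho]^\dagger$, so the two commutators vanish together, which is case~(1). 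If they do not vanish, any candidate real relation $c_1[A_\alpha,\tilde\rho]+c_2[B_\alpha,\tilde\rho]=0$ rewrites as $(c_1+ic_2)[E_\alpha,\tilde\rho]+(-c_1+ic_2)[E_{-\alpha},\tilde\rho]=0$, and the disjointness of the $\pm\alpha$-weight spaces forces $c_1+ic_2=0=-c_1+ic_2$, hence $c_1=c_2=0$ and $\dim P_\alpha=2$.

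Finally I would assemble the three cases. If $[E_{\pm\alpha},\tilde\rho]=0$ we are in~(1). Otherwise $\dim P_\alpha=2$ and the antisymmetric matrix of $\omega|_{P_\alpha}$ in the basis $\{\tilde A_{\alpha,\tilde\rho},\tilde B_{\alpha,\tilde\rho}\}$ has unique off-diagonal entry $-2\,\mathrm{tr}(\tilde\rho H_\alpha)$; it is non-degenerate iff this scalar is nonzero, which is~(3), and vanishes identically otherwise, which is~(2). To see that $\mathrm{tr}(\tilde\rho H_\alpha)\neq 0$ also rules out~(1), I would write $H_\alpha=[E_\alpha,E_{-\alpha}]$ and use cyclicity to get $\mathrm{tr}(\tilde\rho H_\alpha) = -\mathrm{tr}([E_\alpha,\tilde\rho]E_{-\alpha})$, which vanishes whenever $[E_\alpha,\tilde\rho]=0$. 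The only nontrivial ingredient is the $\{0,2\}$ dichotomy via the weight-space argument; everything else reduces to direct bracket manipulation.
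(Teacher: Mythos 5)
Your proposal is correct, and it is exactly the ``straightforward check'' the paper alludes to but does not write out (by its conventions this statement is a Fact, with the proof deferred to the literature): you evaluate the KKS form (\ref{eq:KKS}) on the two spanning vectors of $P_{\alpha}$ from (\ref{eq:P-alpha}) via the $\mathfrak{su}(2)$ relation $[E_{\alpha}-E_{-\alpha},\,i(E_{\alpha}+E_{-\alpha})]=2iH_{\alpha}$, reducing everything to the single scalar $\mathrm{tr}(\tilde{\rho}H_{\alpha})$. The only ingredient you add beyond the paper's implicit computation is the weight-space argument (plus $[E_{-\alpha},\tilde{\rho}]=-[E_{\alpha},\tilde{\rho}]^{\dagger}$) establishing the dichotomy $\dim P_{\alpha}\in\{0,2\}$ and the identity $\mathrm{tr}(\tilde{\rho}H_{\alpha})=-\mathrm{tr}([E_{\alpha},\tilde{\rho}]E_{-\alpha})$ making the three cases exclusive and exhaustive, both of which are sound.
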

\noindent We can now state the Kostant-Sternberg theorem in its usual
form.
\begin{fact}
\label{torus-1} (Kostant-Sternberg theorem \cite{sternberg}) The
orbit ${K.\rho\subset\mathcal{O}_{\rho_{o}}}$ is symplectic if and
only if: (1) There exists ${\tilde{\rho}\in K.\rho}$ such that ${\left[X,\,\tilde{\rho}\right]=0}$
for all ${X\in\mathfrak{t}}$ and (2) For any positive root $\alpha$
if ${\mathrm{tr}\left(\rho H_{\alpha}\right)=0}$ then ${\left[E_{\alpha},\,\tilde{\rho}\right]=0=\left[E_{-\alpha},\,\tilde{\rho}\right]}$.
\end{fact}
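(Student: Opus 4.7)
The plan is to derive Fact \ref{torus-1} by assembling Facts \ref{torus}, \ref{omega ortogonal}, and \ref{counting-1}, which between them have already isolated all of the linear-algebraic content. Condition (1) is literally Fact \ref{torus} (the necessary condition), so essentially all of the work lies in turning the trichotomy of Fact \ref{counting-1} into the additional criterion (2).

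For the forward direction I would assume $K.\rho$ is symplectic. Condition (1) is immediate from Fact \ref{torus}. For (2), fix a positive root $\alpha$ with $\mathrm{tr}(\rho H_\alpha)=0$, and suppose toward contradiction that $[E_\alpha,\tilde{\rho}]\neq 0$ or $[E_{-\alpha},\tilde{\rho}]\neq 0$. Case (2) of Fact \ref{counting-1} then applies: $\dim P_\alpha=2$ and $\omega|_{P_\alpha}=0$. Pick any nonzero $v\in P_\alpha$. By Fact \ref{omega ortogonal}, $\omega_{\tilde{\rho}}(v,w)=0$ for all $w\in P_\beta$ with $\beta\neq\alpha$, while $\omega|_{P_\alpha}=0$ handles $w\in P_\alpha$. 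Since $T_{\tilde{\rho}}K.\rho=\bigoplus_{\beta>0}P_\beta$, the vector $v$ lies in the kernel of $\omega_{\tilde{\rho}}$, contradicting non-degeneracy.

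For the converse I would use (1) to pick $\tilde{\rho}\in K.\rho$ with $[\tilde{\rho},\mathfrak{t}]=0$. Fact \ref{omega ortogonal} makes $\omega_{\tilde{\rho}}$ block-diagonal with respect to $T_{\tilde{\rho}}K.\rho=\bigoplus_{\alpha>0}P_\alpha$, so non-degeneracy is equivalent to non-degeneracy on each block. Hypothesis (2) rules out case (2) of Fact \ref{counting-1}: whenever $\mathrm{tr}(\rho H_\alpha)=0$ we have $[E_{\pm\alpha},\tilde{\rho}]=0$, so by that fact $\dim P_\alpha=0$ and the block drops out. Each remaining $P_\alpha$ must fall into case (3), so $\omega|_{P_\alpha}$ is non-degenerate; hence $\omega_{\tilde{\rho}}$ is non-degenerate. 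Because $K$ acts symplectically on $\mathcal{O}_{\rho_0}$, the restricted form is $K$-invariant, so non-degeneracy at the single point $\tilde{\rho}$ propagates along the whole orbit, and closedness of $\omega|_{K.\rho}$ is inherited from $\omega$ on $\mathcal{O}_{\rho_0}$.

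The principal difficulty here is less mathematical than organizational: Facts \ref{omega ortogonal} and \ref{counting-1} have already reduced symplecticity of $K.\rho$ to a finite root-by-root check, and the statement of Fact \ref{torus-1} is essentially the bookkeeping assertion that the trichotomy collapses to ``case (2) never occurs''. The one place where care is needed is keeping the role of $\rho$ versus the Cartan-centralizing $\tilde{\rho}$ straight, and observing that both conditions refer to the same $K$-orbit so that the choice of representative is harmless.
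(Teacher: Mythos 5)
Your proof is correct and follows essentially the route the paper itself sets up: the paper states this as a known Fact (citing Guillemin--Sternberg) without writing out a proof, but its preceding material --- Fact \ref{torus}, the $\omega$-orthogonality of the $P_{\alpha}$ in Fact \ref{omega ortogonal} giving blockwise non-degeneracy on $T_{\tilde{\rho}}K.\rho$, and the trichotomy of Fact \ref{counting-1} --- is exactly the assembly you carry out. The only delicate points, which you already flag, are the $\rho$ versus $\tilde{\rho}$ bookkeeping in the trace condition and the use of $K$-invariance of $\omega$ to propagate non-degeneracy from the single point $\tilde{\rho}$ to the whole orbit.
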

\noindent  In order to measure how non-symplectic is an orbit $K.\rho$
we will, similarly to \cite{Symplectic geometry of entanglement},
use the notion of degree of degeneracy $\omega|_{K.\rho}$, $D(K.\rho)$.
It is given by

\begin{equation}
D(K.\rho)=\mathrm{dim\,}K.\rho-\mathrm{rank\,}\omega|_{K.\rho}.\label{eq:degenracy}
\end{equation}
The dimension of $K.\rho$ is

\begin{equation}
\mathrm{dim\,}K.\rho=2\left(2\left(\left|\left\{ \alpha|\,\alpha>0\right\} \right|-\left|\left\{ \alpha|\,\alpha>0\,,\left[E_{\alpha},\,\tilde{\rho}\right]=\left[E_{-\alpha},\,\tilde{\rho}\right]=0\right\} \right|\right)\right)\,,\label{eq:dimension-of-orbit}
\end{equation}
where $\left|\mathcal{X}\right|$ denotes the number of elements of
a discrete set $\mathcal{X}$ . The rank of $\omega|_{K.\rho}$ is
\begin{equation}
\mathrm{rank\,}\omega|_{K.\rho}=2\left(\left|\left\{ \alpha|\,\alpha>0\right\} \right|-\left|\left\{ \alpha|\,\alpha>0\,,\mathrm{tr}\left(\tilde{\rho}H_{\alpha}\right)=0\right\} \right|\right).\label{eq:rank}
\end{equation}

\subsection*{Restriction of K\"{a}hler structure }

{
In order to characterize orbits $K.\rho$ that are K\"{a}hler submanifolds of $\mathcal{O}_{\rho_{o}}$
we need the following results
\begin{fact}
\label{possitive Kahler} [\onlinecite{Symplectic geometry of entanglement}] Let
$M$ be a positive K\"{a}hler manifold. Then any complex submanifold
$N\subset M$ is also a K\"{a}hler manifold.
\end{fact}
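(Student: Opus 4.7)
The plan is to exploit the three compatibility relations that tie together $\omega$, $g$ and $J$ on $M$, and show that each of them survives restriction to $N$ provided $TN$ is preserved by $J$ (which is automatic since $N$ is a complex submanifold) and provided $g|_N$ is non-degenerate (which is automatic since $g$ is positive definite).

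First I would unpack what needs to be verified. On a Kähler manifold the data $(\omega, g, J)$ satisfy $\omega(X,Y) = g(JX,Y)$, $g(JX,JY) = g(X,Y)$, $J^2 = -\mathrm{I}$, $d\omega = 0$, and $J$ is integrable. For $N \subset M$ a complex submanifold, integrability of $J|_N$ and the fact $J(T_pN) \subset T_pN$ come for free from the definition of complex submanifold. Closedness of $\omega|_N$ follows from $d(\iota^{\ast}\omega) = \iota^{\ast}(d\omega) = 0$, where $\iota : N \hookrightarrow M$ is the inclusion. The algebraic identities above restrict pointwise to $T_pN$ without any change, because $J$ leaves $T_pN$ invariant. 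The only genuinely substantive point is non-degeneracy of $\omega|_N$.

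This is where positivity of $g$ enters, and it is the one step that would fail without that assumption. Fix $p \in N$ and suppose $X \in T_pN$ satisfies $\omega_p(X,Y) = 0$ for every $Y \in T_pN$. Then $g_p(JX,Y) = 0$ for every $Y \in T_pN$. Since $J(T_pN) \subset T_pN$, the vector $JX$ lies in $T_pN$, so we may take $Y = JX$ and obtain $g_p(JX,JX) = 0$. Positive definiteness of $g_p$ on $T_pN$ (inherited from $M$) forces $JX = 0$, hence $X = 0$ because $J$ is an isomorphism. Thus $\omega|_N$ is non-degenerate, and combined with $d\omega|_N = 0$ this makes $N$ symplectic.

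Putting it together, $(N, \omega|_N, g|_N, J|_N)$ carries a symplectic form, a Riemannian metric and an integrable complex structure that still satisfy the Kähler compatibility identities (since those identities are tensorial and hold on each $T_pN$), so $N$ is a Kähler manifold; moreover $g|_N$ remains positive definite, so the induced Kähler structure is again positive. The main conceptual obstacle is nothing deep, just the non-degeneracy check above; everything else is pullback and the observation that $J$-invariance of $TN$ is what makes the restrictions of $g$, $\omega$ and $J$ mutually compatible rather than merely coexist.
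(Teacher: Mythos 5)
Your argument is correct, and it is the standard proof of this statement: the paper itself does not prove this Fact (by its stated convention, Facts are quoted from the literature, here from Ref.~[\onlinecite{Symplectic geometry of entanglement}]), and the argument given there is essentially the one you reproduce. The key point is exactly the one you isolate — $J$-invariance of $T_pN$ plus positive definiteness of $g$ gives non-degeneracy of $\omega|_N$ via $\omega(X,JX)=g(JX,JX)>0$ — while closedness and the compatibility identities restrict trivially.
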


\begin{prop}\label{complex orbits}
 An orbit $K.{\rho}$ is a complex submanifold of $\mathcal{O}_{\rho_{o}}$ if and only if it is an almost complex sumbanifold of $\mathcal{O}_{\rho_{o}}$.
\begin{proof}
The only thing one has to check is the integrability  is almost-complex of the almost complex structure on $K.{\rho_0}$ that $K.{\rho}$ inherits from  $\mathcal{O}_{\rho_{o}}$. We will deal with this in Appendix 1.
\end{proof}

\end{prop}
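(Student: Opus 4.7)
The plan is to split into two implications. The forward direction, that a complex submanifold is almost complex, is a tautological unwinding of definitions: if $K.\rho$ carries a complex manifold structure compatible with its inclusion into $\mathcal{O}_{\rho_0}$, its holomorphic tangent bundle sits inside $T^{1,0}\mathcal{O}_{\rho_0}$, so the real tangent bundle is $J$-invariant, which is the almost-complex condition.

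For the substantive direction, assume $J_{\tilde\rho}(T_{\tilde\rho} K.\rho) \subset T_{\tilde\rho} K.\rho$ at every $\tilde\rho \in K.\rho$. Then $J$ restricts to an almost complex structure $\hat J$ on $K.\rho$, and by the Newlander--Nirenberg theorem it suffices to prove that the Nijenhuis tensor $N_{\hat J}$ vanishes identically.

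The main step I would carry out is to identify $N_{\hat J}$ with the restriction of the ambient Nijenhuis tensor $N_J$ to $K.\rho$. Given vector fields $X, Y$ tangent to $K.\rho$, I would extend them locally to vector fields on $\mathcal{O}_{\rho_0}$. Since the bracket of two vector fields tangent to a submanifold remains tangent, and $\hat J X = JX$ for $X$ tangent, every term of
\[
N_{\hat J}(X,Y) = [\hat J X, \hat J Y] - \hat J [\hat J X, Y] - \hat J [X, \hat J Y] - [X,Y]
\]
coincides with the corresponding term of $N_J$ evaluated on the extensions. Integrability of $J$ on $\mathcal{O}_{\rho_0}$, recorded in Section~\ref{sec: coadjoint orbits}, gives $N_J \equiv 0$, and hence $N_{\hat J} \equiv 0$.

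The principal obstacle, as I see it, is verifying that this restriction argument is independent of the choice of local extensions --- a standard but not entirely trivial bookkeeping point. A cleaner alternative that bypasses extensions is the $(1,0)$-eigenbundle route: $T^{1,0}\mathcal{O}_{\rho_0}$ is involutive by integrability of $J$; the complexified tangent bundle $(TK.\rho)_{\mathbb{C}}$ is involutive simply because $K.\rho$ is a submanifold; and under the almost-complex hypothesis the induced holomorphic tangent bundle on $K.\rho$ equals the intersection $(TK.\rho)_{\mathbb{C}} \cap T^{1,0}\mathcal{O}_{\rho_0}$, which inherits involutivity. Frobenius plus Newlander--Nirenberg then finishes the job, and I would expect the authors' Appendix~1 to follow one of these two essentially equivalent paths.
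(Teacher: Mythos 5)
Your proposal is correct, and its first (forward) direction matches the trivial direction of Proposition \ref{complex orbits}; for the substantive direction, however, you take a genuinely different route from the paper's Appendix 1. You invoke the general differential-geometric fact that an almost complex submanifold of an integrable complex manifold is complex, proved by restricting the Nijenhuis tensor (which, being tensorial, makes your worry about the choice of local extensions a non-issue) and then applying Newlander--Nirenberg; the only input from the paper you need is the integrability of $J$ on $\mathcal{O}_{\rho_{0}}$, which the authors quote from the literature. The paper instead works concretely: near a point $\tilde{\rho}$ with $[\tilde{\rho},\mathfrak{t}]=0$ it writes the complexified tangent bundle of the orbit as $T^{+}\oplus T^{-}$ with $T^{+}_{x}$ spanned by the vectors $k[E_{\alpha},\tilde{\rho}]k^{-1}$ for the relevant positive roots $\alpha$, uses this root-space description as a local trivialization, and checks closure of holomorphic vector fields under the Lie bracket via the commutation relations of $\mathfrak{k}^{\mathbb{C}}$ --- in effect a hands-on verification of the involutivity criterion \eqref{integrab}, i.e.\ a concrete instance of your ``eigenbundle'' alternative. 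What your argument buys is generality and brevity: it needs no root decomposition, no local trivialization, and (unlike the Appendix, which is phrased for symplectic orbits) no symplecticity assumption, only the $J$-invariance of the tangent spaces. What the paper's computation buys is an explicit identification of the holomorphic and anti-holomorphic directions on the orbit in Lie-algebraic terms, which is of independent use in their framework; it is also self-contained at the level of the bracket characterization of integrability rather than appealing to the Nijenhuis-tensor formulation of Newlander--Nirenberg.
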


Let us now characterize almost complex orbits of $K$ in $\mathcal{O}_{\rho_{o}}$. By the definition of almost complex structure $J_\rho$, an orbit $K.{\rho}$ is almost complex submanifold if and only if for $\tilde{\rho}$ (defined in the Fact 1)
\begin{equation}
\mathrm{ad}_{\tilde{\rho}} \left(T_{\tilde{\rho}} K.\rho \right)\subset T_{\tilde{\rho}} K.\rho  .\label{complex}
\end{equation}
We would like to remark that every simplectic orbit $K.\rho\subset \mathcal{O}_{\rho_{o}}$ can be equipped with the "intrinsic"   K\"{a}hler structure. It follows from the fact\cite{sternberg} that every simplectic orbit $K.\rho$  is diffeomorphic, via the moment map, with coadjoint orbit of some $K$. The latter possess a standard K\"{a}hler structure, as discussed above. We illustrate this phenomenon on a concrete example. Consider the action of the group ${K=SU(2)}$ acting on the projective space ${\mathbb{P}\left(\mathbb{C}^{2j+1}\right)}$ in a natural way (induced from the action of $SU(2)$ on $\mathbb{C}^{2j+1}$ treated as a carrier space of an irreducible representation of $SU(2)$). In this case simplectic orbits of are orbits through states corresponding to nonzero eigenvectors of the operator $\sigma_z$. These orbits are diffeomorphic to two dimensional spheres so they can be equipped with the intrinsic $SU(2)$ invariant K\"{a}hler structure. Nevertheless, only the orbit  through the state corresponding to the maximal or minimal eigenvalue of $\sigma_z$ inherits the  K\"{a}hler structure from ${\mathbb{P}\left(\mathbb{C}^{2j+1}\right)}$. The exhaustive discussion of the above example can be found in [\onlinecite{bengsson}].
}
\section{Geometric description of $CC$ and $CQ$ states\label{sec:geometry of mixed states}}

In the following we apply the ideas presented in sections \ref{sec: coadjoint orbits}
and \ref{sec:Restrictions-of-geometric} to mixed bipartite states.
Let ${\mathcal{H}=\mathcal{H}_{A}\otimes\mathcal{H}_{B}}$, where ${\mathcal{H}_{A}=\mathbb{C}^{N_{1}}}$
and ${\mathcal{H}_{A}=\mathbb{C}^{N_{2}}}$. We start with definitions
of $CC$ and $CQ$ states.

\noindent A quantum state $\rho$ defined on $\mathcal{H}$ is called
a $CC$ state \cite{Horo 1} if it can be written in the form
\begin{equation}
\rho=\sum_{i,j}p_{ij}\kb ii\otimes\kb jj\,,\label{eq:CC-def}
\end{equation}
where $\{\ket i\}_{i=1}^{N_{1}}$ is an orthonormal basis in $\mathcal{H}_{A}$
and $\{\ket i\}_{j=1}^{N_{2}}$ is an orthonormal basis in $\mathcal{H}_{B}$.
 A quantum state $\rho$ defined on $\mathcal{H}$ is called a $CQ$
state if it can be written in the form
\begin{equation}
\rho=\sum_{i}p_{i}\kb ii\otimes\rho_{i}\,,\label{eq:CQ def}
\end{equation}
where $\{\ket i\}_{i=1}^{N_{1}}$ is an orthonormal basis in $\mathcal{H}_{A}$
and $\{\rho_{i}\}_{i=1}^{N_{2}}$ are density matrices defined on
$\mathcal{H}_{B}$.

\noindent In order to use the tools presented in section \ref{sec:Restrictions-of-geometric}
we need to choose some subgroup $K\subset G$. Note that both $CC$
and $CQ$ are ${SU(N_{1})\times SU(N_{2})}$-invariant sets. It turns
out that for $CC$ the relevant group is indeed ${SU(N_{1})\times SU(N_{2})}$.
On the other hand, in order to distinguish geometric properties of
$CQ$ states, one has to take ${SU(N_{1})\times I_{N_{2}}}$.

\subsection{Results for $CC$ states\label{sec:wyniki_geo}}

In the following we prove our main results, i.e. we show that the
orbits through generic $CC$ states of the group ${K=SU(N_{1})\times SU(N_{2})}$,
${K\subset G}$, are the only symplectic orbits in the space of density
matrices on $\mathcal{H}$. We also compute the rank and the dimension
of degeneracy of the symplectic form restricted to $K$-orbits through
$CC$ states. Finally, we illustrate our results by the example two-quibit
system.

The root decomposition (\ref{eq:root-decomposition}) of the Lie algebra
${\mathfrak{k}=\mathfrak{su}(N_{1})\oplus\mathfrak{su}(N_{2})}$ is
a direct sum of root decompositions of $\mathfrak{su}(N_{1})$ and
$\mathfrak{su}(N_{2})$ which are given by (\ref{eq:root-su(n)}).
In the following we use the representation

\begin{equation}
\mathfrak{k}\ni(X_{1},\, X_{2})\mapsto X_{1}\otimes I_{N_{2}}+I_{N_{1}}\otimes X_{2}.\label{eq:rep}
\end{equation}
Under (\ref{eq:rep}) we have

\[
\mathfrak{k}=\mathfrak{t}\oplus\mathfrak{b}^{1}\oplus\mathfrak{b}^{2}\,,
\]
where

\[
\mathfrak{t}=\mathrm{Span}_{\mathbb{R}}\left(iH_{ij}\otimes I_{N_{2}},\, iI_{N_{1}}\otimes H_{kl}:\, i<j,\, k<l\right)\,,
\]

\[
\mathfrak{b}^{1}=\bigoplus_{i>j}\mathrm{Span}_{\mathbb{R}}\left(X_{ij}\otimes I_{N_{2}}\right)\oplus\bigoplus_{i>j}\mathrm{Span}_{\mathbb{R}}\left(Y_{ij}\otimes I_{N_{2}}\right)\,,
\]

\begin{equation}
\mathfrak{b}^{2}=\bigoplus_{k>l}\mathrm{Span}_{\mathbb{R}}\left(I_{N_{1}}\otimes X_{ij}\right)\oplus\bigoplus_{k>l}\mathrm{Span}_{\mathbb{R}}\left(I_{N_{1}}\otimes Y_{ij}\right)\,,\label{eq:root-decomposition-1}
\end{equation}
where matrices $X_{ij}$, $Y_{ij}$ and $H_{ij}$ are defined as in
(\ref{eq:HXY}) and ${i,j\leq N_{1}}$, ${k,l\leq N_{2}}$. We denote
${E_{ij}=\kb ij}$.
\begin{prop}
\label{symplectic-necessary}If the $K$-orbit is symplectic, then
it consists only of $CC$ states, i.e. $CC$ states satisfy the necessary
condition given in Fact \ref{torus}.\end{prop}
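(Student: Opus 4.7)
The plan is to invoke Fact \ref{torus} and then translate the commutation condition with the Cartan subalgebra of $\mathfrak{k}=\mathfrak{su}(N_{1})\oplus\mathfrak{su}(N_{2})$ into the defining property \eqref{eq:CC-def} of a $CC$ state. First I would use the assumption that $K.\rho$ is symplectic to produce, via Fact \ref{torus}, a representative $\tilde{\rho}\in K.\rho$ satisfying $[X,\tilde{\rho}]=0$ for every $X\in\mathfrak{t}$. By the explicit description of $\mathfrak{t}$ recalled just above the proposition, this means $\tilde{\rho}$ commutes with $iH_{ij}\otimes I_{N_{2}}$ for all $i<j\le N_{1}$ and with $iI_{N_{1}}\otimes H_{kl}$ for all $k<l\le N_{2}$.

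Next I would observe that, together with $I_{N_{1}}$, the matrices $\{H_{ij}\}$ span all real-diagonal $N_{1}\times N_{1}$ matrices (and similarly on the second factor), so the commutation relations propagate to $[\tilde{\rho},\kb{i}{i}\otimes I_{N_{2}}]=0$ and $[\tilde{\rho},I_{N_{1}}\otimes\kb{j}{j}]=0$ for every $i$ and $j$. Multiplying, $\tilde{\rho}$ commutes with the rank-one projectors $P_{ij}:=\kb{i}{i}\otimes\kb{j}{j}$, where $\{\ket{i}\}$ and $\{\ket{j}\}$ are the fixed bases of $\mathcal{H}_{A}$ and $\mathcal{H}_{B}$ in which the Cartan subalgebra is diagonal.

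The key step is then to note that $\{P_{ij}\}_{i,j}$ is a complete family of pairwise orthogonal rank-one projectors summing to the identity on $\mathcal{H}$; a hermitian operator commuting with such a family is forced to be diagonal in the product basis $\{\ket{i}\otimes\ket{j}\}$. Writing $\tilde{\rho}=\sum_{i,j}p_{ij}\kb{i}{i}\otimes\kb{j}{j}$ we recognize form \eqref{eq:CC-def}, so $\tilde{\rho}$ is a $CC$ state. The conclusion then propagates to the entire orbit, because the set of $CC$ states is $K$-invariant: a local unitary $g_{1}\otimes g_{2}$ sends the product bases $\{\ket{i}\}, \{\ket{j}\}$ to new orthonormal product bases, preserving the form \eqref{eq:CC-def}.

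I do not foresee any real obstacle here; the argument is essentially a dictionary between the root-theoretic description of the Cartan subalgebra of $\mathfrak{su}(N_{1})\oplus\mathfrak{su}(N_{2})$ under the representation \eqref{eq:rep} and the local product structure on $\mathcal{H}_{A}\otimes\mathcal{H}_{B}$. The only minor bookkeeping point is to recover all the diagonal projectors $\kb{i}{i}$ from the traceless generators $H_{ij}$ by adjoining the identity, which is harmless because $\tilde{\rho}$ commutes with $I$ automatically. Notice moreover that the reverse implication (every $CC$ state satisfies the necessary condition) is immediate from \eqref{eq:CC-def}, since any diagonal state trivially commutes with all diagonal operators, which is worth mentioning to justify the phrasing of the proposition.
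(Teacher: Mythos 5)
Your proposal is correct and follows essentially the same route as the paper: invoke Fact \ref{torus} to get a representative commuting with the Cartan subalgebra $\mathfrak{t}$ of traceless diagonal matrices $X_{1}\otimes I_{N_{2}}+I_{N_{1}}\otimes X_{2}$, conclude diagonality in the product basis (the paper states this as "clear", while you spell out the span/projector argument), and use $K$-invariance of the $CC$ set together with the trivial converse.
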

\begin{proof}
Assume $K.\rho$ is symplectic. By Fact \ref{torus}, ${\left[\mathfrak{t},\,\rho\right]=0}$.
But $\mathfrak{t}$ consists of matrices ${X_{1}\otimes I_{N_{2}}+I_{N_{1}}\otimes X_{2}}$,
where $X_1,X_2$ are traceless diagonal matrices. It is thus clear that ${\rho=\sum_{ik}\,p_{ik}E_{ii}\otimes E_{kk}}$.
Conversely, if ${\rho=\sum_{ik}\,p_{ik}E_{ii}\otimes E_{kk}}$ then
it follows directly from the definition of $\mathfrak{t}$ that ${\left[\mathfrak{t},\,\rho\right]=0}$.\end{proof}
\begin{prop}
\label{sympl cond}The orbit of adjoint action of ${K=SU(N_{1})\times SU(N_{2})}$
through CC state ${\rho=\sum_{i,j}\,p_{ij}\kb ii\otimes\kb jj}$ is symplectic
if and only if the following conditions hold: \end{prop}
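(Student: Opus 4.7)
The plan is to invoke the Kostant--Sternberg theorem (Fact \ref{torus-1}) directly. Proposition \ref{symplectic-necessary} already takes care of the first condition of Fact \ref{torus-1}: every CC state of the stated form is diagonal, hence commutes with every element of the Cartan subalgebra $\mathfrak{t}$ of $\mathfrak{k}=\mathfrak{su}(N_{1})\oplus\mathfrak{su}(N_{2})$, so we may take $\tilde{\rho}=\rho$. The substance of the proof is therefore to unpack condition (2) of Fact \ref{torus-1} in purely combinatorial terms involving the probabilities $p_{ij}$ and their marginals $p_{i}^{A}=\sum_{l}p_{il}$ and $p_{l}^{B}=\sum_{k}p_{kl}$.

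Because the root decomposition of $\mathfrak{k}$ splits as a direct sum coming from the two factors, every positive root is either a positive root of $\mathfrak{su}(N_{1})$ (with ladder operators $E_{\alpha}=E_{ij}\otimes I_{N_{2}}$, $E_{-\alpha}=E_{ji}\otimes I_{N_{2}}$ and $H_{\alpha}=H_{ij}\otimes I_{N_{2}}$ for $i<j$) or a positive root of $\mathfrak{su}(N_{2})$ (with the analogous operators on the second tensor factor). For the first family I will compute, using $[E_{ij},E_{kk}]=(\delta_{jk}-\delta_{ik})E_{ij}$, the two quantities appearing in Fact \ref{torus-1}:
\begin{equation*}
\mathrm{tr}\left(\rho H_{\alpha}\right)=p_{i}^{A}-p_{j}^{A},\qquad [E_{ij}\otimes I_{N_{2}},\rho]=\sum_{l}(p_{jl}-p_{il})\,E_{ij}\otimes E_{ll}.
\end{equation*}
The commutator with $E_{-\alpha}=E_{ji}\otimes I_{N_{2}}$ is the adjoint expression and vanishes under exactly the same condition. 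Hence for a root of the first family, $\mathrm{tr}(\rho H_{\alpha})=0$ means $p_{i}^{A}=p_{j}^{A}$, and $[E_{\pm\alpha},\rho]=0$ means $p_{il}=p_{jl}$ for every $l$. The analogous calculation for the second family yields $\mathrm{tr}(\rho H_{\alpha})=p_{k}^{B}-p_{l}^{B}$ with $k<l$, and the commutator vanishes iff $p_{ik}=p_{il}$ for every $i$. Substituting these equivalences into Fact \ref{torus-1} produces exactly the two conjectured conditions: whenever two $A$-marginals $p_{i}^{A},p_{j}^{A}$ coincide, the rows $i$ and $j$ of the matrix $(p_{kl})$ must coincide, and whenever two $B$-marginals coincide the corresponding columns must coincide.

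No conceptual obstacle arises; the argument is a direct mechanical application of Fact \ref{torus-1}. The only point requiring minor care is that the two families of roots act on different tensor factors and therefore never mix, so the analysis of $\omega|_{P_{\alpha}}$ decouples cleanly between the two factors. Once the commutator and trace formulas above are recorded, the two listed conditions are literally the two instances of clause (2) of Kostant--Sternberg, one per factor, which gives the ``only if'' direction; the ``if'' direction is immediate from the same equivalences applied in reverse.
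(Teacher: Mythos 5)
Your proposal is correct and follows essentially the same route as the paper: a direct application of the Kostant--Sternberg criterion (Fact \ref{torus-1}), with the identical computations of $\mathrm{tr}(\rho H_{\alpha})$ and $[E_{\pm\alpha},\rho]$ for the two families of roots coming from the $\mathfrak{su}(N_{1})$ and $\mathfrak{su}(N_{2})$ factors. The only (harmless) addition is your explicit remark that the diagonal form of the CC state lets one take $\tilde{\rho}=\rho$ in condition (1), which the paper leaves implicit via Proposition \ref{symplectic-necessary}.
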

\begin{enumerate}
\item for any ${i\neq j}$, ${\sum_{l}\,p_{il}=\sum_{l}\,p_{jl}}\Longrightarrow{\forall_l\, p_{il}=p_{jl}}$,
\item for any ${i\neq j}$, ${\sum_{l}\,p_{li}=\sum_{l}\,p_{lj}}\Longrightarrow{\forall_{l}\,p_{li}=p_{lj}}$. \end{enumerate}
\begin{proof}
Let ${\rho=\sum_{i,j}\,p_{ij}E_{ii}\otimes E_{jj}}$. By Fact \ref{torus-1}
we first verify when ${\mathrm{tr}(\rho H_{\alpha})=0}$ :
\begin{eqnarray}
\mathrm{tr}\left(\rho\left(H_{ij}\otimes I_{N_{2}}\right)\right) & = & \mathrm{tr}\left(\sum_{l=1}^{N_{2}}\left(p_{il}E_{ii}-p_{jl}E_{jj}\right)\otimes E_{ll}\right)\\
 & = & \sum_{l=1}^{N_{2}}p_{il}-\sum_{l=1}^{N_{2}}p_{jl}\,,\\
\mathrm{tr}\left(\rho\left(H_{ij}\otimes I_{N_{2}}\right)\right) & = & 0\Leftrightarrow\sum_{l=1}^{N_{2}}p_{il}=\sum_{l=1}^{N_{2}}p_{jl}\,.\label{eq:rank1}
\end{eqnarray}
Similarly
\begin{equation}
\mathrm{tr}\left(\rho\left(I_{N_{1}}\otimes H_{ij}\right)\right)=0\Leftrightarrow\sum_{l=1}^{N_{1}}p_{li}=\sum_{l=1}^{N_{1}}p_{lj}\,.\label{eq:rank2}
\end{equation}
Next, we verify when ${[E_{\alpha},\rho]=[E_{-\alpha},\rho]=0}$.
\begin{eqnarray}
\left[E_{ij}\otimes I_{N_{2}},\rho\right] & = & \sum_{l=1}^{N_{2}}(p_{jl}E_{ij}-p_{il}E_{ij})\otimes E_{ll}\,,\\
\left[E_{ji}\otimes I_{N_{2}},\rho\right] & = & \sum_{l=1}^{N_{2}}(p_{il}E_{ji}-p_{jl}E_{ji})\otimes E_{ll}\,,\\
\left[E_{ij}\otimes I_{N_{2}},\rho\right] & = & \left[E_{ji}\otimes I_{N_{1}},\rho\right]=0\Leftrightarrow\forall_{l}\, p_{jl}=p_{il}\,.\label{eq:p-alpha-0}
\end{eqnarray}
Analogously
\begin{equation}
\left[I_{N_{1}}\otimes E_{ij},\rho\right]=\left[I_{N_{1}}\otimes E_{ji},\rho\right]=0\Leftrightarrow\forall\, p_{lj}=p_{li}\,.\label{eq:p-alpha-01}
\end{equation}
Therefore the condition ${\mathrm{tr}\left(\rho H_{\alpha}\right)=0}\Rightarrow{\left[E_{\alpha},\rho\right]=0}$
translates to the conditions 1 and 2 above.
\end{proof}
One can interpret results stated in the Proposition \ref{sympl cond}
in terms of the reduced density matrices of $\rho$. First, the $K$-orbit
through a $CC$ state $\rho$ is symplectic, if spectra of ${\rho_{1}=\mathrm{tr}_{2}\left(\rho\right)}$
and ${\rho_{2}=\mathrm{tr}_{1}\left(\rho\right)}$ are non-degenerate.
Moreover, whenever there is a pair of equal eigenvalues in the spectrum
of $\rho_{1}$ or $\rho_{2}$, the $K$-orbit through $\rho$ is symplectic
provided $\rho$ satisfies conditions 1 and 2 stated in Proposition
\ref{sympl cond}.
\begin{cor}
For generic $CC$ state the spectra of $\rho_{1}$ and $\rho_{2}$
are non-degenerate. Therefore, the set of all $CC$ states is the
closure of all symplectic $K$-orbits in the space of quantum states.
\label{cc-closure}\end{cor}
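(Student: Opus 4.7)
The statement splits naturally into two pieces, and I would address them in order.

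First I would handle the claim about partial traces. For a $CC$ state written as $\rho=\sum_{ij}p_{ij}\kb ii\otimes\kb jj$, a direct computation gives $\rho_{1}=\mathrm{tr}_{2}(\rho)=\sum_{i}(\sum_{j}p_{ij})\kb ii$ and analogously $\rho_{2}=\sum_{j}(\sum_{i}p_{ij})\kb jj$. So the spectra of $\rho_{1}$ and $\rho_{2}$ are exactly the row-sums $r_{i}=\sum_{j}p_{ij}$ and the column-sums $c_{j}=\sum_{i}p_{ij}$ of the matrix $(p_{ij})$. The ``degeneracy locus'' in the probability simplex $\Delta=\{(p_{ij})\colon p_{ij}\ge 0,\sum p_{ij}=1\}$ is the finite union of the hyperplanes $\{r_{i}=r_{j}\}$ and $\{c_{i}=c_{j}\}$ with $i\ne j$; each is a proper affine subspace and therefore closed of measure zero in $\Delta$. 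Hence on the complement, which is open and dense, both spectra are non-degenerate. This is what ``generic'' means here.

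Next I would put this together with Proposition \ref{sympl cond}. The two hypotheses in that proposition, $\sum_{l}p_{il}=\sum_{l}p_{jl}$ and $\sum_{l}p_{li}=\sum_{l}p_{lj}$, are exactly the degeneracy of the spectra of $\rho_{1}$ and $\rho_{2}$. On the open dense set of non-degenerate probability vectors these hypotheses are vacuous, so conditions (1) and (2) are trivially satisfied, and the $K$-orbit through any such $CC$ state is symplectic. Combined with Proposition \ref{symplectic-necessary}, this yields: the union of all symplectic $K$-orbits contains every ``generic'' $CC$ state, and is contained in the set of all $CC$ states.

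It remains to take closures. For one inclusion, note that the set of $CC$ states is closed in the space of density matrices: it is the image of the compact set $\mathcal{U}(N_{1})/T^{N_{1}}\times\mathcal{U}(N_{2})/T^{N_{2}}\times\Delta$ under the continuous map $(U,V,(p_{ij}))\mapsto\sum p_{ij}(U\kb ii U^{\ast})\otimes(V\kb jj V^{\ast})$, hence compact, hence closed. Since every symplectic orbit lies inside this closed set, so does its closure. For the reverse inclusion, any $CC$ state $\rho_{0}$ can be approximated by $CC$ states $\rho_{\epsilon}$ obtained by an arbitrarily small generic perturbation of the coefficients $p_{ij}$ (the genericity locus is dense in $\Delta$); each $\rho_{\epsilon}$ lies on a symplectic $K$-orbit by the paragraph above, so $\rho_{0}$ lies in the closure. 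Combining the two inclusions yields the corollary.

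The only mild subtlety I anticipate is justifying that a \emph{perturbation of the $p_{ij}$ alone} (not of the bases) suffices to produce an approximating family of generic $CC$ states, but this is immediate because both conditions defining genericity are conditions purely on $(p_{ij})$; no deformation of the underlying product basis is required. Everything else is routine continuity and density.
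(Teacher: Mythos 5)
Your proposal is correct and follows essentially the same route as the paper: the degeneracy conditions on the reduced spectra are a finite union of hyperplanes in the coefficients $(p_{ij})$, so their complement is dense in the set of $CC$ states, and Propositions \ref{symplectic-necessary} and \ref{sympl cond} then give the two inclusions. The only difference is that you spell out explicitly what the paper leaves implicit, namely the closedness of the $CC$ set (via compactness) and the approximation argument for taking closures.
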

\begin{proof}
By Proposition \ref{sympl cond} the degeneracies in the spectra of
$\rho_{1}$ and $\rho_{2}$ are described by equations for hyperplanes
in the set of $CC$ states for fixed bases $\left\{ \ket i\right\} $
and $\left\{ \ket j\right\} $ in $\mathcal{H}_{A}$ and $\mathcal{H}_{B}$
respectively, i.e.
\begin{equation}
\sum_{l}p_{il}=\sum_{l}p_{jl},\,\,\sum_{l}p_{li}=\sum_{l}p_{lj}
\end{equation}
for some $i\neq j$. There is a finite number of them, and thus the
complement of the set described by them is dense in the set of all
CC states.
\end{proof}

{
Let us now use \eqref{complex} to study which symplectic orbits of ${SU(N_1)\times SU(N_2)}$ are actually K\"{a}hler (they inherit the K\"{a}hler structure form $\mathcal{O}_{\rho_{o}}$) . Straightforward computations based on \eqref{eq:p-alpha-0} and \eqref{eq:p-alpha-01} show that the tangent space to the orbit of ${SU(N_1)\times SU(N_2)}$ at the state $\rho$ is spanned by the following vectors (${i>j}$).

\begin{eqnarray}
\left[X_{ij}\otimes I_{N_{2}},i\rho\right] & = & \sum_{l=1}^{N_{2}}(p_{il}-p_{jl})Y_{ij}\otimes E_{ll}\,,\\
\left[Y_{ij}\otimes I_{N_{2}},i\rho\right] & = & \sum_{l=1}^{N_{2}}   (p_{jl}-p_{il}) i X_{ij}\otimes E_{ll}\,,\\
\left[I_{N_1}\otimes X_{ij},i\rho\right] & = & \sum_{k=1}^{N_{1}}  (p_{ki}-p_{kj}) E_{kk}\otimes  Y_{ij}\,,\\
\left[I_{N_1}\otimes Y_{ij},i\rho\right] & = & \sum_{k=1}^{N_{1}}  (p_{kj}-p_{ki})  E_{kk}\otimes  i X_{ij}\ .\ \label{eq:tangent space}
\end{eqnarray}

The action of $\mathrm{ad}_\rho$ on each ${P_\alpha\subset T_\rho K.\rho}$ (see \eqref{eq:root-decomposition-1} for the convention used to describe roots of ${\mathfrak{su}(N_1)\oplus \mathfrak{su}(N_2)}$)
\begin{eqnarray}
\left[\left[X_{ij}\otimes I_{N_{2}},i\rho\right],i\rho\right] & = & - \sum_{l=1}^{N_{2}}(p_{jl}-p_{il})^2 i X_{ij}\otimes E_{ll}\ ,\\ 
\left[\left[Y_{ij}\otimes I_{N_{2}},i\rho\right],i\rho\right] & = & \sum_{l=1}^{N_{2}}(p_{il}-p_{jl})^2 Y_{ij}\otimes E_{ll}\ ,\\
\left[\left[I_{N_1}\otimes X_{ij},i\rho\right] ,i\rho\right] & = &  - \sum_{l=1}^{N_{1}}(p_{kj}-p_{ki})^2 i E_{kk} \otimes X_{ij} \ ,\\  
\left[\left[I_{N_1}\otimes Y_{ij},i\rho\right],i\rho\right] & = &  \sum_{k=1}^{N_{1}}  (p_{ki}-p_{kj})^2   E_{kk} \otimes Y_{ij}\ .  \label{eq:complex local}
\end{eqnarray}

We have ${\mathrm{ad}_\rho \left(P_\alpha\right)\cap P_\beta ={0}}$ for ${\alpha\neq \beta}$. Direct inspection shows that in order for the orbit through the state $\rho$ to be K\"{a}hler the following conditions have to be satisfied,
\begin{eqnarray}
\left(p_{il}-p_{jl}\right)\left(\left(p_{il}-p_{jl}\right)- \beta_{i,j} \right)=0 \ ,   \ \forall (i,j)\ ,\ N_1\geq i>j \geq 1 \   \forall\ l=1,\ldots,N_2 \ , \label{condcomplex1} \\
\left(p_{ki}-p_{kj}\right)\left(\left(p_{ki}-p_{kj}\right)- \gamma_{i,j} \right)=0  \ , \forall (i,j)\ ,\ N_2\geq i>j \geq 1 \   \forall\ k=1,\ldots,N_1 \ , \label{condcomplex2}
\end{eqnarray}
where $\beta_{i,j}$ and $\gamma_{i,j}$ are real paremetres depending only on indices $i$ and $j$. 
\begin{prop} {\label{kahlerlu} }The solutions to conditions \eqref{condcomplex1} and \eqref{condcomplex2} are the following:

\end{prop}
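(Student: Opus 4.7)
The plan is to decode conditions \eqref{condcomplex1} and \eqref{condcomplex2} into combinatorial constraints on the joint-distribution matrix $P=(p_{ij})$ of eigenvalues. Each condition asserts that for every ordered pair of row-indices $i>j$ (resp.\ column-indices) there exists a scalar $\beta_{ij}$ (resp.\ $\gamma_{ij}$) such that the difference vector $(p_{il}-p_{jl})_{l=1}^{N_2}$ takes values only in $\{0,\beta_{ij}\}$. Equivalently, rows $i$ and $j$ of $P$ either coincide or differ by a \emph{step vector}: a vector equal to $\beta_{ij}$ on its support and $0$ elsewhere. Condition \eqref{condcomplex2} is the dual statement for columns.

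My first move is to partition rows (resp.\ columns) of $P$ into equivalence classes of identical rows (resp.\ columns), passing to a reduced matrix $\bar P$ in which every pair of distinct rows/columns genuinely differs. Each pairwise difference in $\bar P$ is then a genuine nonzero step vector, and the original $P$ is recovered by repeating rows and columns according to the multiplicities.

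Next I would invoke the cocycle identity
\[
(\mathrm{row}_a-\mathrm{row}_c) \;=\; (\mathrm{row}_a-\mathrm{row}_b) + (\mathrm{row}_b-\mathrm{row}_c),
\]
which forces the sum of any two row-difference step vectors to be a step vector. A short case analysis shows that $\beta\mathbf{1}_S+\beta'\mathbf{1}_{S'}$ takes at most two values (one of which is $0$) only when $S=S'$ (scalars sum), $S\cap S'=\emptyset$ with $\beta=\beta'$ or one of them zero, or when one of $S,S'$ is contained in the other with a cancellation $\beta+\beta'\in\{0,\beta'\}$. Running the dual argument for columns and demanding simultaneous consistency forces the supports on both sides to form a rigid laminar family, pinning down $\bar P$ up to a short list of elementary shapes.

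Translating back, I expect the admissible $P$ to be precisely those of \emph{affine} form $p_{il}=f(i)+g(l)$ --- which subsumes the product-like cases $\rho=\rho_A\otimes I_{N_2}/N_2$ and $\rho=I_{N_1}/N_1\otimes\rho_B$ --- together with the low-rank perturbations $aI+\sum_r c_r\,\kb{i_r j_r}{i_r j_r}$ of the maximally mixed state by pure product projectors satisfying the compatibility conditions forced by the step-vector sum analysis, and finally direct sums of such pieces placed on mutually orthogonal product-basis subspaces. The main obstacle is the combinatorial case analysis for how step vectors can sum to step vectors, in particular keeping track simultaneously of the row- and column-side laminar structures; once the row/column constraints are imposed jointly, I expect most configurations to be eliminated and the admissible $P$ to collapse to exactly the list above.
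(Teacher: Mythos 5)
Your write-up is not a proof of Proposition \ref{kahlerlu}. The translation of \eqref{condcomplex1}--\eqref{condcomplex2} into the statement that every pairwise row (resp.\ column) difference of $P$ takes values only in $\{0,\beta_{i,j}\}$ (resp.\ $\{0,\gamma_{i,j}\}$) is correct, and the reduction to classes of identical rows/columns plus the cocycle identity is a sensible opening. But everything after that is only announced: the case analysis of how step vectors can sum to step vectors, and above all the \emph{simultaneous} imposition of the row and column constraints, is exactly where the content of the Proposition lies, and you defer it (``I expect\ldots''). Moreover, the list you expect is partly wrong as stated: generic ``direct sums of such pieces on mutually orthogonal product-basis subspaces'' violate the conditions --- already $p_{11}=c_1$, $p_{22}=c_2$, $p_{12}=p_{21}=0$ with $c_1,c_2>0$ has row difference $(c_1,-c_2)$, whose two entries are nonzero and of opposite sign and hence cannot both lie in a set $\{0,\beta\}$ --- and the ``compatibility conditions'' on the perturbations $a\,\mathrm{I}+\sum_r c_r\kb{i_rj_r}{i_rj_r}$ are never derived. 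So you have neither reached the two solutions asserted in the Proposition nor proved the alternative classification you propose; by contrast, the paper's argument is a short direct one: check the two candidate families and argue that two unequal nonzero entries $p_{ij}\neq p_{i'j'}$ are incompatible with the conditions.

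That said, one part of your divergence should not be brushed aside, because it exposes a genuine tension with the statement you were asked to prove. Additively separable matrices $p_{il}=f(i)+g(l)$ --- for instance $N_1=N_2=2$ with $p_{11}=p_{12}=a$, $p_{21}=p_{22}=b$, $a\neq b$, $2a+2b=1$, i.e.\ $\rho=\rho_A\otimes I_{N_2}/N_2$ --- do satisfy \eqref{condcomplex1}--\eqref{condcomplex2} exactly as printed, since all row differences and all column differences are constant vectors; such a state has two distinct nonzero entries, yet no contradiction of the kind invoked in the paper's proof arises for it (and by Proposition \ref{sympl cond} the corresponding orbit is even symplectic, so restricting attention to symplectic orbits does not exclude it). Hence the dichotomy of Proposition \ref{kahlerlu} does not follow from the two displayed conditions alone, and a correct treatment must either identify additional requirements or sharpen the conditions before classifying their solutions. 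That is precisely the analysis your sketch leaves undone, so as it stands the proposal cannot be accepted as a proof, but the discrepancy it points to is worth resolving explicitly.
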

\begin{enumerate}
	\item ${p_{i_0 j_0}=1}$ and ${p_{ij}=0}$ for ${(i,j)\neq (i_0,j_0)}$. 
	\item ${p_{ij}=\frac{1}{N_1 N_2}}$ for all pairs of indices $(i,j)$. 
\end{enumerate}

\begin{proof}
One easily checks that the above satisfy  \eqref{condcomplex1} and \eqref{condcomplex2}. Assuming that there exist two pairs of indices $(i,j)$ and $(i',j')$ such that ${0\neq p_{ij}\neq p_{i'j'} \neq 0}$ leads to a contradiction.
\end{proof}

The above reasoning reproduces results from [\onlinecite{Chruscinski}] where the author showed that pure separable states form the unique  K\"{a}hler orbit of the complexification of $K$, $K^{\mathbb{C}}$, in $\mathcal{O}_{\rho_0}$. However, we would like to point out that in general there might be more  K\"{a}hler orbits of ${K=SU(N_{1})\times SU(N_{2})}$ in the set of density matrices. Proposition \ref{kahlerlu} treats this problem.}

We now turn to a detailed description of symplectic properties of
orbits through $CC$ states. We compute dimensions of orbits, ${\mathrm{dim}\, K.\rho}$,
rank of the symplectic form restricted to orbits ${\mathrm{rk}\left.\omega\right|_{K.\rho}}$
, and its degree of degeneracy, ${D\left(K.\rho\right)}$. We first
introduce the notation which will be used in formulas for these quantities.

For fixed ${\rho\in\mathcal{CC}}$ we consider the coefficients $p_{ij}$
of $\rho$ as entries of the ${N_{1}\times N_{2}}$ matrix $P$. Let
$R_{i}$ be its $i$-th row and $C_{j}$ be its $j$-th column. Let
$S(X)$ denote the sum of elements of $X$, for $X$ being either
a row or a column. Define
\begin{equation}
\mathrm{\mathcal{SR}}=\{S(R_{i})\}_{i=1}^{i=N_{1}},\,\mathrm{\mathcal{SC}}=\{S(C_{j})\}_{j=1}^{j=N_{2}}\,.
\end{equation}
That is, $\mathcal{SR}$ and $\mathcal{SC}$ consist of all numbers
one can get by summing elements in rows and columns of $P$ respectively.
For each ${r\in\mathrm{\mathcal{SR}}}$ and ${c\in\mathcal{SC}}$, let
\begin{equation}
\mathcal{I}_{r}=\{i:S(R_{i})=r\},\,\mathcal{J}_{c}=\{j:\, S(C_{j})=c\}\,,
\end{equation}
be sets consisting of indices that label rows and columns of $P$
whose sums of elements are equal to $r$ and $c$ respectively. Of
course, ${\{1,\ldots,N_{1}\}=\bigcup_{r\in\mathcal{SR}}\mathcal{I}_{r}}$
and ${\{1,\ldots,N_{2}\}=\bigcup_{c\in\mathcal{SC}}\mathcal{I}_{c}}$,
where $\bigcup$ denotes the union of sets. Moreover, for each ${r\in\mathrm{\mathcal{SR}}}$
and ${c\in\mathcal{SC}}$, let

\begin{equation}
\mathrm{\mathcal{R}}_{r}=\{R_{i}:S(R_{i})=r\},\,\mathrm{\mathcal{C}}_{c}=\{C_{j}:S(C_{j})=c\}\,,
\end{equation}
be the sets consisting of rows and columns of $P$ whose sums equal
$r$ and $c$ respectively. Let
\begin{equation}
\mathcal{R}=\bigcup_{r\in\mathcal{SR}}\mathcal{R}_{r},\,\mathcal{C}=\bigcup_{c\in\mathrm{\mathcal{SC}}}\mathrm{\mathcal{C}}_{c}\,,
\end{equation}
be the sets consisting of all rows and columns of $P$. For each ${R\in\mathrm{\mathcal{R}}},{C\in\mathcal{C}}$
let
\begin{equation}
\mathcal{I}_{R}=\{i:R_{i}=R\},\,\mathcal{J}_{C}=\{j:C_{j}=C\}\,,
\end{equation}
be the sets consisting of indices that label rows and columns of $P$
that equal $R$ and $C$ respectively. Of course for each row $R$
we have ${\mathcal{I}{}_{R}\subset\mathcal{I}_{S(R)}}$ and for each
column $C$ we have ${\mathcal{I}_{C}\subset\mathcal{I}_{S(C)}}$. Moreover
for each ${r\in\mathrm{\mathcal{SR}}}$, ${\mathcal{I}{}_{r}=\bigcup_{R\in\mathrm{\mathcal{R}}_{r}}\mathcal{I}_{R}}$
and for each ${c\in\mathrm{\mathcal{SC}}}$, ${\mathcal{J}_{c}=\bigcup_{C\in\mathrm{\mathcal{\mathcal{C}}}}\mathcal{J}_{C}}$.
Finally, let us denote by $\left|\mathcal{X}\right|$ the number of
elements of a discrete set $\mathcal{X}$ and by $\binom{a}{b}$ the
binomial coefficient. In what follows we assume the convention ${\binom{a}{b}=0}$
for $a<b$.

\noindent  By (\ref{eq:dimension-of-orbit}) to calculate ${\mathrm{dim}\, K.\rho}$
it is enough to determine when ${\left[E_{\alpha},\,\rho\right]=0=\left[E_{-\alpha},\,\rho\right]}$.
Using (\ref{eq:p-alpha-0}) and (\ref{eq:p-alpha-01}) we have

\begin{eqnarray}
\left[E_{ij}\otimes I_{N_{2}},\,\rho\right] & = & 0\Longleftrightarrow i,j\in\mathcal{I}_{R}\,\,\mathrm{for}\,\mathrm{some}\, R\in\mathcal{R}\,,\\
\left[I_{N_{1}}\otimes E_{ij},\,\rho\right] & = & 0\Longleftrightarrow i,j\in\mathcal{J}_{C}\,\,\mathrm{for}\,\mathrm{some}\, C\in\mathcal{C}.
\end{eqnarray}
Hence

\begin{equation}
\mathrm{dim}\, K.\rho=2\left(\binom{N_{1}}{2}-\sum_{R\in\mathcal{R}}\binom{\left|\mathcal{I}_{R}\right|}{2}+\binom{N_{2}}{2}-\sum_{C\in\mathcal{C}}\binom{\left|\mathcal{J}_{S}\right|}{2}\right)\,.\label{eq:dimK-CC}
\end{equation}

\noindent By (\ref{eq:rank}) to calculate ${\mathrm{rank}\,\left.\omega\right|_{K.\rho}}$
it is enough to determine when ${\mathrm{tr}\left(\rho H_{\alpha}\right)=0}$.
By (\ref{eq:rank1}) and (\ref{eq:rank2}) we have
\begin{eqnarray}
\mathrm{tr}\left(\rho\left(H_{ij}\otimes I_{N_{2}}\right)\right) & = & 0\Longleftrightarrow i,j\in\mathcal{I}_{r}\,\,\mathrm{for}\,\mathrm{some}\, r\in\mathcal{SR}\,,\\
\mathrm{tr}\left(\rho\left(I_{N_{1}}\otimes H_{ij}\right)\right) & = & 0\Longleftrightarrow i,j\in\mathcal{J}_{c}\,\,\mathrm{for}\,\mathrm{some}\, c\in\mathcal{SC}\,.
\end{eqnarray}
Thus

\begin{equation}
\mathrm{rank}\,\left.\omega\right|_{K.\rho}=2\left(\binom{N_{1}}{2}-\sum_{r\in\mathcal{SR}}\binom{\left|\mathcal{I}_{r}\right|}{2}+\binom{N_{2}}{2}-\sum_{c\in\mathcal{SC}}\binom{\left|\mathcal{J}_{c}\right|}{2}\right),\label{eq:Rank-CC}
\end{equation}
Having established formulas for ${\mathrm{dim}\, K.\rho}$ and ${\mathrm{rank}\,\left.\omega\right|_{K.\rho}}$
we arrive at our final result:
\begin{prop}
\label{mainCC} The dimension of the degeneracy of the symplectic
form on the orbit through the $CC$ state ${\rho=\sum_{i,j}\,p_{ij}E_{ii}\otimes E_{jj}}$
is equal to
\begin{equation}
\mathrm{D}(K.\rho)=2\left(\sum_{r\in\mathrm{\mathcal{SR}}}\binom{\left|\mathcal{I}_{r}\right|}{2}-\sum_{R\in\mathcal{R}}\binom{\left|\mathcal{I}_{R}\right|}{2}+\sum_{c\in\mathrm{\mathcal{SC}}}\binom{\left|\mathcal{J}_{c}\right|}{2}-\sum_{C\in\mathcal{C}}\binom{\left|\mathcal{J}_{C}\right|}{2}\right).\,\label{eq:gegeneracy CC}
\end{equation}

\end{prop}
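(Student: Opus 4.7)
The proof will be essentially a bookkeeping argument: the two quantities $\dim K.\rho$ and $\mathrm{rank}\,\omega|_{K.\rho}$ have already been computed in equations \eqref{eq:dimK-CC} and \eqref{eq:Rank-CC}, so the plan is to substitute these into the definition \eqref{eq:degenracy} of the degree of degeneracy and simplify.

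First I would recall that by the Kostant--Sternberg description (Fact \ref{counting-1}), each positive root $\alpha$ contributes either $0$ or $2$ to $\dim K.\rho$ and either $0$ or $2$ to $\mathrm{rank}\,\omega|_{K.\rho}$, with the contributions governed by whether $[E_{\pm\alpha},\rho]=0$ and whether $\mathrm{tr}(\rho H_\alpha)=0$, respectively. For the CC state $\rho=\sum_{i,j}p_{ij}E_{ii}\otimes E_{jj}$ the relevant conditions on indices were already made explicit in \eqref{eq:p-alpha-0}--\eqref{eq:p-alpha-01} and \eqref{eq:rank1}--\eqref{eq:rank2}, and translate into the combinatorial conditions ``$i,j$ lie in the same $\mathcal{I}_R$ or $\mathcal{J}_C$'' versus ``$i,j$ lie in the same $\mathcal{I}_r$ or $\mathcal{J}_c$''. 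Counting pairs in each equivalence class produces exactly the binomial sums appearing in \eqref{eq:dimK-CC} and \eqref{eq:Rank-CC}, so I would simply cite these two formulas.

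Next I would apply $D(K.\rho)=\dim K.\rho-\mathrm{rank}\,\omega|_{K.\rho}$. The two $2\binom{N_1}{2}$ terms and the two $2\binom{N_2}{2}$ terms cancel between the two formulas, leaving
\[
D(K.\rho)=2\left(\sum_{r\in\mathcal{SR}}\binom{|\mathcal{I}_r|}{2}-\sum_{R\in\mathcal{R}}\binom{|\mathcal{I}_R|}{2}+\sum_{c\in\mathcal{SC}}\binom{|\mathcal{J}_c|}{2}-\sum_{C\in\mathcal{C}}\binom{|\mathcal{J}_C|}{2}\right),
\]
which is precisely \eqref{eq:gegeneracy CC}.

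The only sanity check worth flagging is non-negativity of the right-hand side. Since each $\mathcal{I}_r$ decomposes as the disjoint union $\bigcup_{R\in\mathcal{R}_r}\mathcal{I}_R$, superadditivity of $\binom{\cdot}{2}$ on disjoint unions gives $\binom{|\mathcal{I}_r|}{2}\geq\sum_{R\in\mathcal{R}_r}\binom{|\mathcal{I}_R|}{2}$, and analogously for the column terms; summing over $r\in\mathcal{SR}$ and $c\in\mathcal{SC}$ confirms $D(K.\rho)\geq 0$, with equality precisely when two rows (resp.\ columns) of $P$ sharing a common sum must be identical, i.e.\ when the Kostant--Sternberg conditions of Proposition \ref{sympl cond} hold. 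There is no genuine obstacle in this proof beyond carefully matching the combinatorial labels $\{\mathcal{I}_R,\mathcal{J}_C\}$ versus $\{\mathcal{I}_r,\mathcal{J}_c\}$ to the two commutator/trace conditions; the rest is direct subtraction.
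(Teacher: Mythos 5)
Your proposal is correct and follows essentially the same route as the paper: the formulas \eqref{eq:dimK-CC} and \eqref{eq:Rank-CC} are derived from Fact \ref{counting-1} via the commutator and trace conditions \eqref{eq:p-alpha-0}--\eqref{eq:p-alpha-01} and \eqref{eq:rank1}--\eqref{eq:rank2}, and the result is obtained by direct subtraction using \eqref{eq:degenracy}. The non-negativity check via superadditivity of $\binom{\cdot}{2}$ on the refinements $\mathcal{I}_r=\bigcup_{R\in\mathcal{R}_r}\mathcal{I}_R$ is a nice extra observation not made explicit in the paper, but the core argument is the same.
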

$CC$ states, for which the corresponding $K$-orbits have the maximal
degree of degeneracy ${\mathrm{D}(K.\rho)}$, correspond to so-called
magic rectangles \cite{rectangles}. To be more precise, the maximum
in equation (\ref{eq:gegeneracy CC}) is obtained when ${\mathrm{\left|\mathcal{R}\right|}=N_{1}}$,
${\mathrm{\left|\mathcal{C}\right|}=N_{2}}$ and both $\mathcal{SR}$
and $\mathcal{SC}$ have precisely one element. Translating these
conditions to the properties of ${N_{1}\times N_{2}}$ matrix $p_{ij}$
encoding a given $CC$ state $\rho$, one arrives at the following
conditions:
\begin{enumerate}
\item Each row and colum of $p_{ij}$ have to consist of different elements.
\item Sums of elemenst in each row are the same. The same concerns sums
of elements in each column.
\end{enumerate}

\subsection*{Two-qubit $CC$ states}

Let ${\rho=p_{11}E_{11}\otimes E_{11}+p_{12}E_{11}\otimes E_{22}+p_{21}E_{22}\otimes E_{11}+p_{22}E_{22}\otimes E_{22}}$.
We will now use Proposition \ref{mainCC} to calculate dimensions
of orbits, ranks of the form $\omega_{\rho}$, and its degrees of
degeneration. Therefore, let us consider the matrix
\[
P=\left(\begin{array}{cc}
p_{11} & p_{12}\\
p_{21} & p_{22}
\end{array}\right)\,.
\]
Of course, ${1=\mathrm{Tr}\rho=p_{11}+p_{12}+p_{21}+p_{22}}$. There
are four possibilities:

\paragraph{Distinct sums both in columns and in rows}

Consider the case ${p_{11}+p_{12}\neq p_{21}+p_{22}}$, ${p_{11}+p_{21}\neq p_{12}+p_{22}}$.
It follows that no two columns or two rows are identical, therefore
due to Proposition \ref{mainCC}, ${\mathrm{dim}\, K.\rho=2\left(\binom{2}{2}+\binom{2}{2}\right)=4}$.
The rank of $\left.\omega\right|_{K.\rho}$ is also $4$, so it is
non-degenerate and the orbit is symplectic.

\paragraph{Equal sums in columns, distinct sums in rows}

Consider the case ${p_{11}+p_{12}\neq p_{21}+p_{22}}$, ${p_{11}+p_{21}=p_{12}+p_{22}}$.
Because ${\mathrm{Tr}\rho=1}$, it follows that ${p_{11}+p_{21}=p_{21}+p_{22}=\frac{1}{2}}$,
and setting ${\alpha=p_{11}},{\beta=p_{22}}$ gives us ${p_{21}=\frac{1}{2}-\alpha}$,
${p_{12}=\frac{1}{2}-\beta}$ and ${\alpha\neq\beta}$, that is ${P=\left(\begin{array}{cc}
\alpha & \frac{1}{2}-\beta\\
\frac{1}{2}-\alpha & \beta
\end{array}\right).}$ Due to Proposition \ref{mainCC}, the rank of $\omega_{\rho}$ is
equal to ${2\left(\binom{2}{2}+\binom{2}{2}-\binom{2}{2}\right)=2}$,
because there is a pair of columns with equal sums. Now, if these
columns are equal, then ${\alpha+\beta=\frac{1}{2}}$, meaning ${P=\left(\begin{array}{cc}
\alpha & \alpha\\
\frac{1}{2}-\alpha & \frac{1}{2}-\alpha
\end{array}\right)}$. Condition ${\alpha\neq\beta}$ implies ${\alpha\neq\frac{1}{4}}$. From
Proposition \ref{mainCC} it follows that ${\mathrm{dim}\, K.\rho=2\left(\binom{2}{2}+\binom{2}{2}-\binom{2}{2}\right)=2}$,
as there are two identical columns and no identical rows. The orbit
is symplectic of dimension 2. For any ${\beta\neq\frac{1}{2}-\alpha}$
and ${\beta\neq\alpha}$, we get ${\mathrm{dim}\, K.\rho=4}$,
so the degeneracy of ${\left.\omega\right|_{K.\rho}}$ is equal to ${\mathrm{D}(K.\rho)=2}$.

\paragraph{Distinct sums in columns, equal sums in rows}

Similarly to the case above, we consider $\rho$ such that ${P=\left(\begin{array}{cc}
\alpha & \frac{1}{2}-\alpha\\
\frac{1}{2}-\beta & \beta
\end{array}\right)}$ and ${\beta\neq\alpha}$. We get that if ${\alpha+\beta=\frac{1}{2}}$
and ${\alpha\neq\frac{1}{4}}$, then the orbit is symplectic of dimension
2, and if ${\beta\neq\frac{1}{2}-\alpha}$ and ${\beta\neq\alpha}$, then
the orbit is of dimension $4$ and $\left.\omega\right|_{K.\rho}$
has degeneracy $2$.

\paragraph{Equal sums in columns and rows}

What remains is the case ${p_{11}+p_{12}=p_{21}+p_{22}}$, ${p_{11}+p_{21}=p_{12}+p_{22}}$.
With ${\alpha=p_{11}}$, simple calculations lead us to ${P=\left(\begin{array}{cc}
\alpha & \frac{1}{2}-\alpha\\
\frac{1}{2}-\alpha & \alpha
\end{array}\right)}$. Now, if ${\alpha=\frac{1}{4}}$, then both columns and both rows are
equal, and the orbit is of dimension $0$. In fact, it is just a point
${\rho=\frac{1}{4}\cdot Id}$. For any ${\alpha\neq\frac{1}{4}}$, both
of the columns are distinct, and so are the rows. Therefore, due to
Proposition \ref{mainCC}, ${\mathrm{dim}\, K.\rho=4}$ and ${\mathrm{rk}\,\left.\omega\right|_{K.\rho}=0}$,
that is the degeneracy of $\left.\omega\right|_{K.\rho}$ is equal
to ${\mathrm{D}(K.\rho)=4}$, and is maximal possible for a $CC$ state.

Because in the case considered $CC$ states in a fixed computational
basis ${\ket 1\ket 1}$, ${\ket 1\ket 2}$, ${\ket 2\ket 1}$ and ${\ket 2\ket 2}$
form a three-dimensional simplex, it is possible to draw pictures
illustrating the results discussed above. We denote vertices of the
simplex we denote by ${E_{11}\otimes E_{11}}$, ${E_{11}\otimes E_{22}}$
, ${E_{22}\otimes E_{11}}$ and ${E_{22}\otimes E_{22}}$. Figures \ref{fig1}-\ref{fig3}
show dimension of $LU$-orbits as well as the rank and degree
of degeneracy for different points in this simplex.

\begin{figure}[H]
\begin{centering}
\includegraphics[width=6cm]{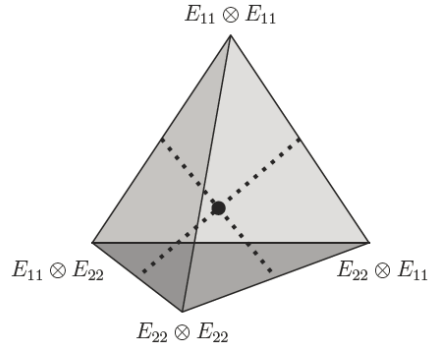}
\par\end{centering}

\caption{\label{fig1} Dimensions of orbits through $CC$ states of two qbits. Large dot:
${\dim\, K.\rho=0}$, dotted lines: ${\dim\, K.\rho=2}$, elsewhere: ${\dim\, K.\rho=4}.$}
\end{figure}

\begin{figure}[H]
\begin{centering}
\includegraphics[width=6cm]{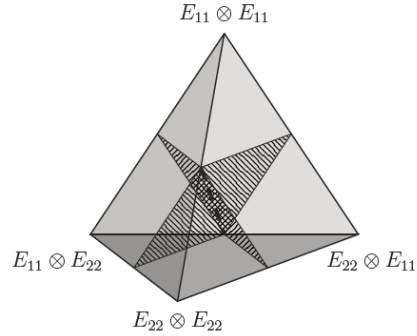}
\par\end{centering}

\caption{ \label{fig2} Ranks of $\left.\omega\right|_{K.\rho}$ for orbits through $CC$
states of two qbits. Thick dashed line: ${\mathrm{rk}\left.\omega\right|_{K.\rho}=0}$
, lined surfaces: ${\mathrm{rk}\left.\omega\right|_{K.\rho}=2}$, elsewhere:
${\mathrm{rk}\left.\omega\right|_{K.\rho}=4}$.}
\end{figure}

\begin{figure}[H]
\begin{centering}

\includegraphics[width=6cm]{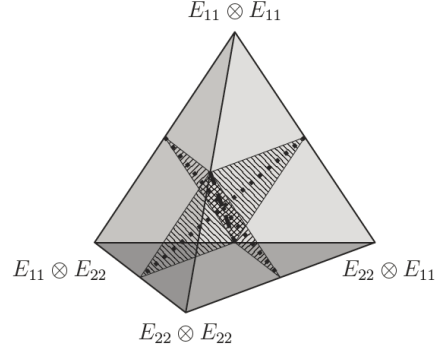}
\par
\end{centering}

\caption{ \label{fig3} Degrees of degeneracy of $\left.\omega\right|_{K.\rho}$ for orbits
through $CC$ states of two qbits. Thick dashed line: ${\mathrm{D}(K.\rho)=4}$,
lined surfaces: ${\mathrm{D}(K.\rho)=2}$, dotted lines and elsewhere:
${\mathrm{D}(K.\rho)=0}$.}
\end{figure}

\subsection{Results for CQ states}

Setting ${K=SU(N_{1})\times I{}_{N_{2}}}$ give us analogous results
for $CQ$ states as we got for $CC$ states. In this case the Lie
algebra of $K$, ${\mathfrak{k}=\mathfrak{su}(N_{1})}$ is represented
on $\mathcal{H}$ {\it via} the mapping

\begin{equation}
\mathfrak{k}\ni X\mapsto X\otimes I_{N_{2}}.\label{eq:rep2}
\end{equation}
Throughout the computations we will use the notation analogous to
the one used for $CC$ states:

\begin{equation}
\mathfrak{k}=\mathfrak{t}\oplus\bigoplus_{i>j}\mathrm{Span}_{\mathbb{R}}\left(X_{ij}\otimes I_{N_{2}}\right)\oplus\bigoplus_{i>j}\mathrm{Span}_{\mathbb{R}}\left(Y_{ij}\otimes I_{N_{2}}\right),\label{eq:root-decomposition2}
\end{equation}

\[
\mathfrak{t}=\mathrm{Span}_{\mathbb{R}}\left(iH_{ij}\otimes I_{N_{2}},\,\, i<j\right)\,,
\]
where matrices $X_{ij}$, $Y_{ij}$ and $H_{ij}$ are defined as in
(\ref{eq:HXY}) and $i,\, j\leq N_{1}$.
\begin{prop}
If the orbit of adjoint action of $K=SU(N_{1})\times I_{N_{2}}$  is
symplectic, then it consists only of CQ states.\label{torus Cq}\end{prop}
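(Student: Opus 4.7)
The plan is to mirror the proof of Proposition \ref{symplectic-necessary} for the CC case. I would apply Fact \ref{torus} to $K = SU(N_1) \times I_{N_2}$: if $K.\rho$ is symplectic, then there exists $\tilde\rho \in K.\rho$ with $[X, \tilde\rho] = 0$ for every $X \in \mathfrak{t}$. Under the representation (\ref{eq:rep2}), the Cartan subalgebra appearing in the root decomposition (\ref{eq:root-decomposition2}) is
$$\mathfrak{t} = \mathrm{Span}_{\mathbb{R}}\left(iH_{ij} \otimes I_{N_2} : i < j\right),$$
i.e., it consists of matrices of the form $iD \otimes I_{N_2}$ where $D$ is a real traceless diagonal $N_1 \times N_1$ matrix.

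Next, I would note that $\tilde\rho$ trivially commutes with $I_{N_1} \otimes I_{N_2}$, so the condition $[\mathfrak{t}, \tilde\rho] = 0$ upgrades to $[E_{ii} \otimes I_{N_2}, \tilde\rho] = 0$ for every $i = 1, \ldots, N_1$. This forces $\tilde\rho$ to be block-diagonal with respect to the decomposition induced by the basis $\{\ket{i}\}$ of $\mathcal{H}_A$, that is,
$$\tilde\rho = \sum_{i=1}^{N_1} \kb{i}{i} \otimes \sigma_i,$$
where each $\sigma_i$ is a positive semidefinite operator on $\mathcal{H}_B$. Setting $p_i = \mathrm{tr}(\sigma_i)$ and, whenever $p_i > 0$, $\rho_i = \sigma_i / p_i$, rewrites $\tilde\rho$ in the CQ normal form (\ref{eq:CQ def}), while the $p_i = 0$ terms contribute nothing.

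Finally, to pass from $\tilde\rho$ to every state in $K.\rho = K.\tilde\rho$, I would use that the CQ class is $K$-invariant: for any $U \in SU(N_1)$,
$$(U \otimes I_{N_2})\,\tilde\rho\,(U^\dagger \otimes I_{N_2}) = \sum_i p_i\, U\kb{i}{i}U^\dagger \otimes \rho_i,$$
and $\{U\ket{i}\}$ is again an orthonormal basis of $\mathcal{H}_A$. Hence every state of the orbit $K.\rho$ is CQ. No serious obstacle is expected here; the only subtle point is that $\mathfrak{t}$ consists of \emph{traceless} diagonal matrices, so one must invoke the trivial commutation with $I_{N_1} \otimes I_{N_2}$ to strengthen commutation with $\mathfrak{t}$ to commutation with each individual $E_{ii} \otimes I_{N_2}$. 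After this observation the argument is completely parallel to the CC case.
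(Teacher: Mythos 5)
Your proposal is correct and follows essentially the same route as the paper: invoke the Kostant--Sternberg necessary condition (Fact \ref{torus}) to get a representative $\tilde\rho$ commuting with the Cartan subalgebra $\mathfrak{t}$ of (\ref{eq:root-decomposition2}), deduce the block-diagonal form $\sum_i \kb{i}{i}\otimes\sigma_i$, and use the $SU(N_1)\times I_{N_2}$-invariance of the CQ class to cover the whole orbit. Your treatment of the tracelessness point (upgrading commutation with $\mathfrak{t}$ to commutation with each $E_{ii}\otimes I_{N_2}$ via the identity) just makes explicit a step the paper leaves implicit.
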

\begin{proof}
Let $K.\rho$ be symplectic. By Fact \ref{torus}, we may assume that
${\left[\mathfrak{t},\,\rho\right]=0}$. Thus $\rho$ and elements from
$\mathfrak{t}$ must have common eigenvalues. Therefore, $\rho$ has
the form of the CQ state. \end{proof}
\begin{prop}
The orbit of the coadjoint action of ${K=SU(N_{1})\times I_{N_{2}}}$
through a $CQ$ state ${\rho=\sum_{i}\,p_{i}\kb ii\otimes\rho_{i}}$ is
symplectic if and only if for any ${i\neq j}$, ${p_{i}=p_{j}}\Longrightarrow {p_{i}\rho_{i}=p_{j}\rho_{j}}$.\label{sympl cq}\end{prop}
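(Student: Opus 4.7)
The plan is to apply the Kostant--Sternberg criterion (Fact~\ref{torus-1}) with $K=SU(N_1)\times I_{N_2}$, whose Cartan subalgebra and positive roots are given by \eqref{eq:root-decomposition2}, paralleling the argument used for $CC$ states in Proposition~\ref{sympl cond}. By Proposition~\ref{torus Cq} the first condition of Fact~\ref{torus-1} (existence of $\tilde\rho\in K.\rho$ with $[\mathfrak{t},\tilde\rho]=0$) holds exactly when $\tilde\rho$ is of $CQ$ form, so I may assume from the outset that $\rho=\sum_i p_i\kb{i}{i}\otimes\rho_i$ and verify when the second condition—$\mathrm{tr}(\rho H_\alpha)=0 \Rightarrow [E_\alpha,\rho]=[E_{-\alpha},\rho]=0$—is satisfied for every positive root $\alpha$.

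For the positive root indexed by $i<j$, the Cartan element is $H_{ij}\otimes I_{N_2}=(\kb{i}{i}-\kb{j}{j})\otimes I_{N_2}$, so a direct calculation analogous to \eqref{eq:rank1} yields
\[
\mathrm{tr}\bigl(\rho(H_{ij}\otimes I_{N_2})\bigr)=p_i\,\mathrm{tr}(\rho_i)-p_j\,\mathrm{tr}(\rho_j)=p_i-p_j,
\]
so the trace vanishes precisely when $p_i=p_j$. For the corresponding raising and lowering operators $E_{ij}\otimes I_{N_2}$ and $E_{ji}\otimes I_{N_2}$, the commutator computation in the spirit of \eqref{eq:p-alpha-0} gives
\[
[E_{ij}\otimes I_{N_2},\rho]=\kb{i}{j}\otimes(p_j\rho_j-p_i\rho_i),
\qquad
[E_{ji}\otimes I_{N_2},\rho]=\kb{j}{i}\otimes(p_i\rho_i-p_j\rho_j),
\]
both of which vanish if and only if $p_i\rho_i=p_j\rho_j$.

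Combining these two observations, Fact~\ref{torus-1} immediately translates into the claimed implication: whenever $p_i=p_j$ for some $i\neq j$, one must have $p_i\rho_i=p_j\rho_j$. Conversely, if this implication holds for every pair $i\neq j$, then condition~(2) of Fact~\ref{torus-1} is met root by root, and symplecticity follows. I do not anticipate a significant obstacle here—the key computations are entirely parallel to the $CC$ case—but one subtlety worth flagging is that the block-diagonal structure of $\rho$ is crucial for $[E_{ij}\otimes I_{N_2},\rho]$ to factor as $\kb{i}{j}\otimes(\cdot)$; this is where the $CQ$ form (guaranteed by Proposition~\ref{torus Cq}) is actually used, ensuring that the ``off-diagonal blocks'' of $\rho$ in the basis $\{\ket{i}\}$ all vanish.
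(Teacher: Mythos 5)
Your proposal is correct and follows essentially the same route as the paper: both verify the Kostant--Sternberg criterion root by root for $K=SU(N_1)\times I_{N_2}$, computing $\mathrm{tr}\bigl(\rho(H_{ij}\otimes I_{N_2})\bigr)=p_i-p_j$ and $[E_{ij}\otimes I_{N_2},\rho]=\kb ij\otimes(p_j\rho_j-p_i\rho_i)$, exactly matching the paper's computations. Your explicit appeal to Proposition~\ref{torus Cq} to justify assuming the $CQ$ form is a minor presentational refinement, not a different argument.
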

\begin{proof}
Let ${\rho=\sum_{i}\,p_{i}E_{ii}\otimes\rho_{i}}$. Following Fact \ref{torus},
we check for which $\alpha$ we have ${\mathrm{tr}(\rho H_{\alpha})=0}$.
\begin{equation}
\mathrm{tr}(\rho(H_{ij}\otimes I_{N_{2}}))=\mathrm{tr}(p_{i}E_{ii}\otimes\rho_{i}-p_{j}E_{jj}\otimes\rho_{j})=p_{i}-p_{j}
\end{equation}
\begin{equation}
\mathrm{tr}(\rho(H_{ij}\otimes I_{N_{2}}))=0\iff p_{i}=p_{j}\label{eq:degeneracy cq}
\end{equation}
We also check the condition ${[E_{\alpha},\rho]=[E_{-\alpha},\rho]=0}$.

\begin{equation}
[E_{ij}\otimes I_{N_{2}},\rho]=p_{j}E_{ij}\otimes\rho_{j}-p_{i}E_{ij}\otimes\rho_{i}\label{orbit-dim-1}
\end{equation}
\[
[E_{ji}\otimes I_{N_{2}},\rho]=p_{i}E_{ji}\otimes\rho_{i}-p_{j}E_{ji}\otimes\rho_{j}
\]
\[
[E_{ij}\otimes I_{N_{2}},\rho]=[E_{ji}\otimes I_{N_{2}},\rho]=0\iff p_{j}\rho_{j}=p_{i}\rho_{i}
\]
So ${\mathrm{tr}(\rho H_{\alpha})=0}\Longrightarrow{[E_{\alpha},\rho]=0}$
for all positive roots $H_{\alpha}$ translate to the conditions given
above.

\end{proof}
\begin{cor}
A generic orbit of the adjoint action of ${K=SU(N_{1})\times I_{N_{2}}}$
through a $CQ$ state is symplectic and the set of all CQ states is
equal to the closure of the sum of all symplectic orbits of ${SU(N_{1})\times I_{N_{2}}}$
in the space of quantum states. \end{cor}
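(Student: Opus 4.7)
The plan is to mirror the proof of Corollary \ref{cc-closure} exactly, using Proposition \ref{sympl cq} as the analogue of Proposition \ref{sympl cond}. There are two assertions to verify: (i) a generic $K$-orbit through a $CQ$ state is symplectic, and (ii) the closure of the union of all symplectic $K$-orbits coincides with the set $\mathcal{CQ}$ of all $CQ$ states.

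For (i), I would observe that by Proposition \ref{sympl cq}, for a $CQ$ state ${\rho=\sum_{i}p_{i}\kb ii\otimes\rho_{i}}$ the orbit $K.\rho$ is symplectic whenever the weights ${p_{1},\ldots,p_{N_{1}}}$ are pairwise distinct, since then the implication ${p_{i}=p_{j}\Rightarrow p_{i}\rho_{i}=p_{j}\rho_{j}}$ holds vacuously. With the $CQ$ structure in a fixed basis $\{\ket{i}\}$ parametrized by the non-negative operators ${p_{i}\rho_{i}}$ subject to ${\sum_{i}p_{i}=1}$, the conditions ${p_{i}=p_{j}}$ define a finite collection of real-algebraic hyperplanes, whose complement is open and dense in the set of basis-fixed $CQ$ states. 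Passing to the full $\mathcal{CQ}$, which is the $K$-sweep of these, density is preserved by $K$-invariance.

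For (ii), one inclusion is immediate: by Proposition \ref{torus Cq}, every symplectic $K$-orbit lies inside $\mathcal{CQ}$, and since $\mathcal{CQ}$ is closed in the space of density matrices (it is the image of the compact set of data $(\{\ket{i}\},p_{i},\rho_{i})$ under a continuous map), the closure of the union of symplectic orbits is still contained in $\mathcal{CQ}$. For the reverse inclusion, any $\rho\in\mathcal{CQ}$ may be approximated by $CQ$ states with pairwise distinct weights $p_{i}$ — a small generic perturbation of the weights $p_{i}$ suffices — and by step (i) each such approximating state lies on a symplectic $K$-orbit. Hence $\rho$ lies in the closure of the union of symplectic orbits.

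The only mildly delicate point is the density step in (i): one must check that perturbing the $p_{i}$ to make them distinct can be done while remaining inside $\mathcal{CQ}$ and while making the perturbation arbitrarily small in the operator norm. This is routine, since the parametrization $(p_{i},\rho_{i})\mapsto\sum_{i}p_{i}\kb ii\otimes\rho_{i}$ is continuous, and an arbitrarily small perturbation of the weights leaving the $\rho_{i}$ fixed produces a nearby $CQ$ state with the required non-degeneracy. This completes the outline.
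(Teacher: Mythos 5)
Your proposal is correct and follows essentially the same route as the paper, which simply declares the proof ``analogous to the proof of Corollary \ref{cc-closure}'': genericity of pairwise distinct weights $p_{i}$ (the degeneracies $p_{i}=p_{j}$ being finitely many hyperplane conditions) combined with Proposition \ref{sympl cq} for symplecticity and Proposition \ref{torus Cq} plus closedness of the $CQ$ set for the two inclusions. Your write-up is, if anything, slightly more explicit than the paper's one-line proof, but it is the same argument.
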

\begin{proof}
Analogous to the proof of Corollary \ref{cc-closure}.
\end{proof}

We now give more detailed description of $K$-orbits through $CQ$
states. We compute dimensions of orbits, ${\mathrm{dim}\, K.\rho}$
, rank of the symplectic form restricted to orbits ${\mathrm{rk}\left.\omega\right|_{K.\rho}}$,
and its degree of degeneracy, ${D\left(K.\rho\right)}$. We first introduce
necessary notation which we then use to compute numbers we are interested
in. Let us fix ${\rho\in\mathcal{CQ}}$ and let
\begin{equation}
\mathcal{P}=\{p_{i}\}_{i=1}^{N_{1}}
\end{equation}
 be the set of probabilities that appear in (\ref{eq:CQ def}) . For
each ${p\in\mathcal{P}}$, let
\begin{equation}
\mathcal{I}_{p}=\{i:p_{i}=p\}\,,
\end{equation}
 be the set of indices that have the same value of $p$. Of course
${\{1,\ldots,N_{1}\}=\bigcup_{p\in\mathcal{P}}\mathcal{I}_{p}}$. For
${p\in\mathcal{P}}$, let
\begin{equation}
\mathrm{\mathcal{Q}}_{p}=\{p_{i}\rho_{i}:p_{i}=p\}\,.
\end{equation}
Let
\begin{equation}
\mathcal{Q}=\{p_{i}\rho_{i}\}=\bigcup_{p\in\mathcal{P}}Q_{p}\,,\label{eq:def Q}
\end{equation}
be the set consisting of all ${p_{i}\rho_{i}}$ that appear in the decomposition
(\ref{eq:CQ def}). For each ${\sigma\in\mathcal{Q}}$, let
\begin{equation}
\mathcal{I}_{\sigma}=\{i:p_{i}\rho_{i}=\sigma\}\,.\label{eq:def I sigma}
\end{equation}
Of course for each ${\sigma\in\mathcal{Q}}$ we have ${\mathcal{I}_{\sigma}\subset\mathcal{I}_{\mathrm{tr}\sigma}}$.
Moreover for each ${p\in\mathcal{P}}$, ${\mathcal{I}_{p}=\bigcup_{\sigma\in\mathcal{Q}_{p}}\mathcal{I}_{\sigma}}$.
 As before, we assume the convention ${\binom{a}{b}=0}$ for ${a<b}$.

We essentially repeat arguments that were given to justify Proposition
\ref{mainCC}. The only difference now is the structure of the group
$K$. One should introduce minor corrections in the argumentation.
In particular, sets $\mathcal{R}$ and $\mathcal{C}$ should be replaced
by the set $\mathcal{Q}$. Also, the set $\mathcal{P}$ should be
taken instead of the sets $\mathcal{SR}$ and $\mathcal{SC}$. By
the analogy we obtain the following formulas:

\begin{equation}
\mathrm{dim\,}K.\rho=2\left(\binom{N_{1}}{2}-\sum_{\sigma\in\mathcal{Q}}\binom{\left|\mathcal{I}_{\sigma}\right|}{2}\right),
\end{equation}

\begin{equation}
\mathrm{rk}\left.\omega\right|_{K.\rho}=2\left(\binom{N_{1}}{2}-\sum_{p\in\mathcal{P}}\binom{\left|\mathcal{I}_{p}\right|}{2}\right).
\end{equation}
We can now state the final result.
\begin{prop}
\label{main CQ}The dimension of the degeneracy on the $K$-orbit
through a $CQ$ state $\rho$ is equal to
\begin{equation}
\mathrm{D}(K.\rho)=2\left(\sum_{p\in\mathcal{P}}\binom{\left|\mathcal{I}_{p}\right|}{2}-\sum_{\sigma\in\mathcal{Q}}\binom{\left|\mathcal{I}_{\sigma}\right|}{2}\right).
\end{equation}

\end{prop}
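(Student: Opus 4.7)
The plan is to proceed exactly by analogy with the proof of Proposition \ref{mainCC}, using the simpler root structure of $\mathfrak{k} = \mathfrak{su}(N_1)$ embedded via (\ref{eq:rep2}). The only two ingredients I need are the definition of the degree of degeneracy in (\ref{eq:degenracy}), namely $D(K.\rho) = \dim K.\rho - \mathrm{rank}\,\omega|_{K.\rho}$, together with the counting formulas (\ref{eq:dimension-of-orbit}) and (\ref{eq:rank}) applied to the positive roots $E_{ij}$ (for $i>j$, $i,j\leq N_1$) of $\mathfrak{su}(N_1)$.

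First I would re-examine which positive roots contribute to each quantity. For the dimension, (\ref{eq:dimension-of-orbit}) says I must count positive roots $\alpha$ with $[E_\alpha,\rho]=[E_{-\alpha},\rho]=0$; by (\ref{orbit-dim-1}) this happens exactly when $p_i\rho_i = p_j\rho_j$, i.e., when $i,j$ lie in the same class $\mathcal{I}_\sigma$ for some $\sigma\in\mathcal{Q}$. The number of such pairs is $\sum_{\sigma\in\mathcal{Q}}\binom{|\mathcal{I}_\sigma|}{2}$, and there are $\binom{N_1}{2}$ positive roots in total, yielding the first formula stated just above the proposition. For the rank, (\ref{eq:rank}) requires me to count positive roots with $\mathrm{tr}(\rho H_\alpha)=0$; by (\ref{eq:degeneracy cq}) this is equivalent to $p_i=p_j$, grouping pairs by the classes $\mathcal{I}_p$ for $p\in\mathcal{P}$, and gives the second formula.

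The final step is a purely algebraic subtraction: the $2\binom{N_1}{2}$ contributions cancel, leaving precisely the claimed expression. There is essentially no obstacle, but for consistency I would also note that $\mathcal{I}_\sigma \subset \mathcal{I}_{\mathrm{tr}\,\sigma}$ (because $p_i\rho_i = p_j\rho_j$ implies, by taking the trace, $p_i=p_j$), which ensures that each $\mathcal{I}_p$ is partitioned by the $\mathcal{I}_\sigma$ with $\sigma\in\mathcal{Q}_p$; this makes $D(K.\rho)\geq 0$ manifest term by term, as it must be for a degree of degeneracy.
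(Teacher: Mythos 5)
Your proposal is correct and follows essentially the same route as the paper: the paper also obtains Proposition \ref{main CQ} by repeating the argument for Proposition \ref{mainCC} with the root system of $\mathfrak{su}(N_1)$, using (\ref{orbit-dim-1}) and (\ref{eq:degeneracy cq}) to count the roots entering (\ref{eq:dimension-of-orbit}) and (\ref{eq:rank}) via the classes $\mathcal{I}_\sigma$ and $\mathcal{I}_p$, and then subtracting according to (\ref{eq:degenracy}). Your closing remark that $\mathcal{I}_\sigma\subset\mathcal{I}_{\mathrm{tr}\,\sigma}$ makes the nonnegativity of $\mathrm{D}(K.\rho)$ manifest is a nice touch consistent with the notation set up in the paper.
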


\section{Euler-Poincar\'{e} characteristics of homogenous spaces\label{sec:topology}}

Having discussed the geometric properties of $CC$ and $CQ$ states
we focus now on their topological features. In the following we recall
the notion of the Euler-Poincar\'{e} characteristics for compact
manifolds and homogenous spaces. In particular we invoke the Hopf-Samelson
theorem which will be used in the Section \ref{sec:Topol_results} for caclulation
of Euler-Poincar\'{e} characteristics of $K$-orbits through $CC$
and $CQ$ states.

Let $K$ be a compact connected Lie group. Assume that $K$ acts on
a compact manifold $M$. It is known that each orbit ${\mathcal{O}_{x}=\{g.x:\, g\in K\}}$
of $K$-action on $M$ is a homogenous space, i.e. ${\mathcal{O}_{x}=K/K_{x}}$,
where $K_{x}$ is the isotropy subgroup of $x$, that is, ${K_{x}=\{g\in K:\, g.x=x\}}$.
In the following we analyze the topological structure of orbits $\mathcal{O}_{x}$.
In particular we invoke the Hopf-Samelson theorem \cite{Hopf Samelson}
which says that the Euler-Poincar\'{e} characteristics, $\chi(\mathcal{O}_{x})$
is always non-negative and positive if and only if the ranks of $K$
and $K_{x}$ are the same.

\subsection*{Euler-Poincar\'{e} characteristics}

The most fundamental topological invariant of a topological space
$X$ is the Euler-Poincar\'{e} characteristics. For a compact manifold
$X$ it is defined as

\begin{equation}
\mathcal{\chi}(X)=\sum_{p=0}^{\mathrm{dim}(M)}\left(-1\right)^{p}\mathrm{dim}C^{p}(X),
\end{equation}
where $C^{p}(K)$ denotes the space of closed $p$-forms. Recall that
the $p$-th cohomology group of $X$ is a quotient space of the closed
$p$-forms $C^{p}(X)$ by the exact $p$-forms $E^{p}(X)$

\[
H^{p}(X)=C^{p}(X)/E^{p}(X),
\]

\[
C^{p}(X)=\mathrm{Ker}\left(d:\bigwedge^{p}T^{\ast}X\rightarrow\bigwedge^{p+1}T^{\ast}X\right)\,,
\]

\[
E^{p}(X)=\mathrm{Ran}\left(d:\bigwedge^{p-1}T^{\ast}X\rightarrow\bigwedge^{p}T^{\ast}X\right).
\]
The beauty of the Euler-Poincar\'{e} characteristics is manifested
by the following equality

\begin{equation}
\sum_{p=0}^{\mathrm{dim}X}\left(-1\right)^{p}\mathrm{dim}C^{p}(X)=\mathcal{\chi}(X)=\sum_{p=0}^{\mathrm{dim}X}\left(-1\right)^{p}\mathrm{dim}H^{p}(X).\label{eq:euler}
\end{equation}

\subsection*{Euler-Poincar\'{e} characteristics of homogenous spaces }

Here we state the Hopf-Samelson theorem for the Euler-Poincar\'{e}
e characteristics of a homogenous space. We first recall concepts
of a maximal torus and of a Weyl group. A maximal torus of $K$, denoted
by $T$, is defined as a maximal abelian compact and connected subgroup
of $K$. A maximal torus is in general not unique. Yet, all maximal
tori are conjugate, i.e. they are of the form ${T'=gTg^{-1}}$ for ${g\in K}$.
The normalizer, ${N(T)}$ is defined by ${N(T)=\{g\in K:\, g^{-1}Tg=T\}}$.
Analogously, the centralizer, ${C(T)}$, is given by ${C(T)=\{g\in K:\, g^{-1}tg=t\,,t\in T\}}$.
Both ${N(T)}$ and ${C\left(T\right)}$ are compact. Centralizer ${C(T)}$
is clearly a normal subgroup of ${N(T)}$. Therefore the quotient group
${W_{T}=N(T)/C(T)}$ is well defined and finite \cite{Hall}. The group
${W_{K}}$ is called a Weyl group of $K$. For example, when ${K=SU(N)}$,
the maximal torus $T$ may be chosen to consists of diagonal matrices
in ${SU(N)}$. The corresponding $W_{K}$ is given by these matrices
in $K$ which permute diagonal elements of ${t\in T}$ and hence ${|W_{K}|=N!}$.
We can now state the Hopf-Samelson theorem.
\begin{fact}
(Hopf-Samelson theorem) \label{(Hopf-Samelson-theorem)}Let $K$ be
a compact connected Lie group and $K_{x}$ be a closed subgroup of
$K$ . If $K_{x}$ contains some maximal torus of $K$, ie. ${T\subset K_{x}}$,
the Euler-Poincar\'{e} characteristic is given by ${\chi(K/K_{x})=\frac{|W_{K}|}{|W_{K_{x}}|}}$.
When $K_{x}$ does not contain a maximal torus ${\chi(K/K_{x})=0}$.
\end{fact}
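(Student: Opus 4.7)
The plan is to prove this via the Lefschetz fixed point theorem applied to the left multiplication action of a suitably chosen element of a maximal torus. Pick a topological generator $t$ of a maximal torus $T\subset K$, that is, an element whose powers are dense in $T$; such a $t$ exists by Kronecker's theorem. Consider the diffeomorphism $L_t:K/K_x\to K/K_x$, $gK_x\mapsto tgK_x$. Since $K$ is connected, $t$ is path-connected to the identity inside $K$, so $L_t$ is homotopic to the identity map, and hence the Lefschetz number satisfies $L(L_t)=\chi(K/K_x)$.

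Next I would identify the fixed point set of $L_t$. A coset $gK_x$ is fixed by $L_t$ exactly when $g^{-1}tg\in K_x$. Because $t$ is a topological generator of $T$ and $K_x$ is closed, this is equivalent to $g^{-1}Tg\subset K_x$, i.e.\ $K_x$ contains the maximal torus $g^{-1}Tg$ of $K$. This immediately handles the second half of the theorem: if $K_x$ contains no maximal torus, the fixed point set is empty, so $\chi(K/K_x)=L(L_t)=0$.

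For the first half, assume $T\subset K_x$. Then $g^{-1}Tg$ is a maximal torus of $K_x$, and since all maximal tori of $K_x$ are conjugate in $K_x$, there exists $k\in K_x$ with $(gk)^{-1}T(gk)=T$, so $gk\in N_K(T)$. Thus every fixed coset has a representative in $N_K(T)$, and two elements $n_1,n_2\in N_K(T)$ represent the same fixed coset iff $n_1^{-1}n_2\in N_K(T)\cap K_x=N_{K_x}(T)$. Hence the fixed point set is in bijection with $N_K(T)/N_{K_x}(T)$, which has cardinality $|W_K|/|W_{K_x}|$, since $N_K(T)/T=W_K$ and $N_{K_x}(T)/T=W_{K_x}$ (using $T\subset K_x$).

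The main obstacle, and the part requiring care, is checking that every fixed point of $L_t$ contributes $+1$ to the Lefschetz number, so that $L(L_t)$ equals the cardinality of the fixed point set rather than a signed count. To see this, at any fixed point $p$ the differential $dL_t|_p$ is an orthogonal transformation (with respect to a $K$-invariant metric on $K/K_x$) whose eigenvalue $1$ is absent: if it were present, the corresponding tangent direction would yield a one-parameter family of fixed points, contradicting isolation, which follows from genericity of $t$. Pairing up the remaining complex-conjugate eigenvalues $e^{\pm i\theta_j}$, one computes $\det(I-dL_t|_p)=\prod_j |1-e^{i\theta_j}|^2>0$, so the local Lefschetz index is $+1$. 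Summing gives $L(L_t)=|W_K|/|W_{K_x}|$, which completes the proof.
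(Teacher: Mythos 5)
The paper does not actually prove this statement: it is labelled a Fact, and the authors explicitly state that Facts are known results quoted from the literature (here Hopf and Samelson's original paper), so there is no internal argument to compare yours against. Judged on its own, your Lefschetz-fixed-point proof is the standard modern argument and is essentially correct: translation by a topological generator $t$ of $T$ is homotopic to the identity, a coset $gK_x$ is fixed iff $g^{-1}Tg\subset K_x$ (closedness of $K_x$ is used exactly where you use it), which gives the vanishing half at once; and when $T\subset K_x$, conjugacy of maximal tori in $K_x$ identifies the fixed set with $N_K(T)/N_{K_x}(T)$. One point worth making explicit there: $K_x$ need not be connected, but since $T$ is maximal in the connected group $K$ one has $C_{K_x}(T)=K_x\cap C_K(T)=K_x\cap T=T$, so $N_{K_x}(T)/T$ really is $W_{K_x}$ in the sense used in the paper, and the counting $|N_K(T)/N_{K_x}(T)|=|W_K|/|W_{K_x}|$ is legitimate.

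Two small repairs in the index computation. First, your exclusion of eigenvalue $1$ of $dL_t|_p$ should be justified rather than asserted: the fixed-point set of an isometry is a totally geodesic submanifold whose tangent space at $p$ is $\ker(dL_t|_p-\mathrm{id})$, and isolation comes from the finiteness of the fixed set you already established (which is where the genericity of $t$ entered), so an eigenvalue $1$ would indeed contradict it. Second, an orthogonal map without eigenvalue $1$ can still have eigenvalue $-1$ with odd multiplicity, so the eigenvalues need not all pair off as $e^{\pm i\theta_j}$; this does not hurt the conclusion, since each such factor contributes $1-(-1)=2>0$ to $\det(\mathrm{I}-dL_t|_p)$, which therefore remains positive and gives local index $+1$. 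Both issues disappear if you compute the differential algebraically: at the fixed coset $nK_x$, $L_t$ acts in the chart $X\mapsto n\,e^{X}K_x$ as $\mathrm{Ad}(s)$ on $\mathfrak{k}/\mathfrak{k}_x$ with $s=n^{-1}tn\in T$; since $\mathfrak{k}_x\supset\mathfrak{t}$ is $\mathrm{Ad}(T)$-invariant, $\mathfrak{k}/\mathfrak{k}_x$ is a sum of root planes on which $\mathrm{Ad}(s)$ rotates by angles $\alpha(s)\notin2\pi\mathbb{Z}$ (by topological generation), so there is no eigenvalue $1$ and $\det(\mathrm{I}-dL_t|_p)=\prod_j\left(2-2\cos\theta_j\right)>0$ directly.
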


\section{Euler-Poincar\'{e} characteristic of orbits through pure separable,
$CC$, and $CQ$ states \label{sec:Topol_results}}

In this section we compute Euler-Poincar\'{e} characteristic for
orbits through pure separable, $CC$, and $CQ$ states. We show that
the Euler-Poincar\'{e} characteristic of the orbit of the relevant
group distinguishes classes of pure separable, $CC$, and $CQ$ states.
We start with the discussion of pure separable states. While discussing
$CC$ and $CQ$ states we will use the notation from Section \ref{sec:geometry of mixed states}.

\subsection{Pure separable states}

Consider the system consisting of $L$ distinguishable particles described
by 
\begin{equation}
{\mathcal{H}=\mathbb{C}^{N_{1}}\otimes\mathbb{C}^{N_{2}}\otimes\ldots\otimes\mathbb{C}^{N_{L}}}.
\end{equation}
Manifold of pure states, ${\mathbb{P}\mathcal{H}}$, consist of all
rank-one orthogonal projectors defined on $\mathcal{H}$. The group
of local unitary operations ${K=SU(N_{1})\times SU(N_{2})\times\ldots\times SU(N_{L})}$
acts on ${\mathbb{P}\mathcal{H}}$ in a natural manner,
\[
k.\kb{\psi}{\psi}=k\kb{\psi}{\psi}k^{-1},\,\text{for }k\in K\,\text{and }\kb{\psi}{\psi}\in\mathbb{P}\mathcal{H}\,.
\]
The manifold of separable states, $\mathrm{Sep}$, consists of states
of the form
\[
\kb{\psi}{\psi},\text{where }\ket{\psi}=\ket{\phi_{1}}\otimes\ket{\phi_{2}}\otimes\ldots\otimes\ket{\phi_{L}},
\]
for some normalized ${\ket{\psi_{i}}\in\mathbb{C}^{N_{i}}}$. Separable
states form the $K$-orbit through $\kb{\psi_{0}}{\psi_{0}}$, where
$\ket{\psi_{0}}$ is a simple tensor. In what follows we will prove
that the manifold of separable states is the only $K$-orbit that
have non-vanishing Euler-Poincar\'{e} characteristic. Any maximal
torus of $K$ is given by ${T=T_{1}\times T_{2}\times\ldots\times T_{L}}$,
where each $T_{i}$ is some maximal torus of ${SU(N_{i})}$. It is a
matter of straightforward calculation to see that

\begin{equation}
|W_{K}|=N_{1}!\cdot N_{2}!\cdot\ldots\cdot N_{L}!\,.
\end{equation}
One easily checks that if a stabilizer of a given state $K_{\kb{\psi}{\psi}}$
contains some $T$, then $\kb{\psi}{\psi}$ is separable (proof is
analogous to the proof of Proposition \ref{torus}). Moreover, it
is known that for the separable $\kb{\psi}{\psi}$ we have \cite{Zyczkowski Bengengson}

\begin{equation}
K_{\kb{\psi}{\psi}}=\prod_{i=1}^{i=L}S\left(U(1)\times SU(N_{i}-1)\right).
\end{equation}
Straightforward calculation gives
\begin{equation}
|W_{K_{\kb{\psi}{\psi}}}|=\left(N_{1}-1\right)!\cdot\left(N_{2}-1\right)!\cdot\ldots\cdot\left(N_{L}-1\right)!\,.
\end{equation}
Taking into account the above discussion and Fact \ref{(Hopf-Samelson-theorem)}
we arrive at the following Proposition.
\begin{prop}
Among all orbits of the local unitary group $K=SU(N_{1})\times SU(N_{2})\times\ldots\times SU(N_{L})$
in the manifold of pure states only the orbit of separable states
have the the non-vanishing Euler-Poincar\'{e} characteristic. It
is given by ${\chi\left(\mathrm{Sep}\right)=N_{1}\cdot N_{2}\cdot\ldots\cdot N_{L}}.$
\end{prop}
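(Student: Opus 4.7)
The plan is to invoke the Hopf-Samelson theorem, so that the statement reduces to two tasks: (i) characterizing the pure states whose $K$-stabilizer $K_{\kb{\psi}{\psi}}$ contains a maximal torus of $K$, and (ii) computing the ratio $|W_{K}|/|W_{K_{\kb{\psi}{\psi}}}|$ when it does. All ingredients for step (ii) are already laid out in the paragraph preceding the proposition, so the real content is step (i).

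For step (i) I would fix a maximal torus $T=T_{1}\times\cdots\times T_{L}$ of $K$, where each $T_{i}$ is diagonal in a chosen basis $\{\ket{e^{(i)}_{1}},\ldots,\ket{e^{(i)}_{N_{i}}}\}$ of $\mathbb{C}^{N_{i}}$. The assumption $T\subset K_{\kb{\psi}{\psi}}$ means that every $t\in T$ acts on $\ket{\psi}$ by a scalar phase. Expanding $\ket{\psi}=\sum c_{i_{1}\cdots i_{L}}\ket{e^{(1)}_{i_{1}}}\otimes\cdots\otimes\ket{e^{(L)}_{i_{L}}}$ in the product basis, this forces any two nonzero coefficients to carry the same $T$-character. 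Because the characters of $T$ separate distinct multi-indices (up to the single overall phase one is allowed to absorb), all nonzero $c_{i_{1}\cdots i_{L}}$ must share the same multi-index, so $\ket{\psi}$ is a simple tensor of basis vectors, i.e.\ separable. The converse is immediate: for a separable $\ket{\psi}=\ket{\phi_{1}}\otimes\cdots\otimes\ket{\phi_{L}}$, the product torus diagonal in any bases of $\mathbb{C}^{N_{i}}$ containing $\ket{\phi_{i}}$ sits inside $K_{\kb{\psi}{\psi}}$. This argument is the pure-state analogue of the proof of Proposition \ref{symplectic-necessary}, with the single caveat that one stabilizes a ray rather than a vector.

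Step (ii) is then arithmetic. By Hopf-Samelson, $\chi(K.\kb{\psi}{\psi})=0$ on every non-separable orbit. On the separable orbit $\mathrm{Sep}$, using the orders $|W_{K}|=\prod_{i=1}^{L}N_{i}!$ and $|W_{K_{\kb{\psi}{\psi}}}|=\prod_{i=1}^{L}(N_{i}-1)!$ already computed in the excerpt (the second one via the explicit form of the stabilizer $\prod_{i} S(U(1)\times SU(N_{i}-1))$), one obtains
\[
\chi(\mathrm{Sep})=\frac{|W_{K}|}{|W_{K_{\kb{\psi}{\psi}}}|}=\prod_{i=1}^{L}\frac{N_{i}!}{(N_{i}-1)!}=N_{1}N_{2}\cdots N_{L}.
\]

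The main obstacle is the weight-separation step inside (i): stabilizing $\kb{\psi}{\psi}$ only forces $t\ket{\psi}=\chi(t)\ket{\psi}$ for some character $\chi$ of $T$, not $t\ket{\psi}=\ket{\psi}$, so one must absorb the global phase and verify that the remaining $T$-action still distinguishes different product-basis multi-indices. Once this is handled, everything else is either a direct quotation from the Hopf-Samelson fact or a small factorial simplification.
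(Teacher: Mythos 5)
Your proposal is correct and follows essentially the same route as the paper: Hopf--Samelson, the characterization of orbits with stabilizer containing a maximal torus as the separable orbit, and the quoted stabilizer $\prod_{i} S\left(U(1)\times SU(N_{i}-1)\right)$ with Weyl-group orders $\prod_{i}(N_{i}-1)!$ yielding $\chi(\mathrm{Sep})=N_{1}\cdots N_{L}$ and $\chi=0$ elsewhere. Your explicit weight-separation argument (including absorbing the overall phase when stabilizing a ray) merely fills in the step the paper leaves as ``one easily checks,'' so the two proofs coincide in substance.
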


\subsection{CC and CQ states}

In order to compute the Euler-Poincar\'{e} characteristic of orbits
through $CC$ and $CQ$ states we need to compute the stabilizers
of the action of relevant groups acting on the state of interest.
We first show that $CC$ and $CQ$ states are uniquely characterized
by the non-vanishing Euler-Poincar\'{e} characteristic of orbits
${SU(N_{1})\times SU(N_{2})}$ and ${SU(N_{1})\times I_{N_{2}}}$ respectively.
\begin{prop}
\label{eurel cc char}Let ${\rho\in\mathcal{O}_{\rho_{0}}}$ be a mixed
state of a bipartite system ${\mathbb{C}^{N_{1}}\otimes\mathbb{C}^{N_{2}}}$.
Let $K=SU(N_{1})\times SU(N_{2})$ be a group of the local unitary
operations. The $K$-orbit through $\rho$, ${K.\rho}$, has non-vanishing
Euler-Poincar\'{e} characteristic if and only if $\rho$ is a $CC$
state. \end{prop}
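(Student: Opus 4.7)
The plan is to apply the Hopf--Samelson theorem (Fact \ref{(Hopf-Samelson-theorem)}) to the orbit $K.\rho = K/K_\rho$. This reduces the question of whether $\chi(K.\rho) \neq 0$ to the single criterion: does the stabilizer $K_\rho$ contain a maximal torus of $K$? Thus the entire argument amounts to translating the torus-containment condition into the $CC$ condition on $\rho$, and this translation is already essentially done in Proposition \ref{symplectic-necessary}.

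For the easy direction, suppose $\rho$ is a $CC$ state, so in some product orthonormal basis $\{|i\rangle\}\otimes\{|j\rangle\}$ we have $\rho=\sum_{i,j}p_{ij}|i\rangle\langle i|\otimes|j\rangle\langle j|$. Let $T'\subset K$ be the maximal torus consisting of all elements of $K$ that are diagonal in this product basis (the image under the representation \eqref{eq:rep} of the diagonal subgroups of $SU(N_1)$ and $SU(N_2)$ in these bases). Then every element of $T'$ commutes with $\rho$, so $T'\subset K_\rho$, and Hopf--Samelson gives $\chi(K.\rho)>0$.

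For the converse, assume $\chi(K.\rho)\neq 0$. By Hopf--Samelson, some maximal torus $T'$ of $K$ sits inside $K_\rho$. Since any two maximal tori of the compact connected group $K$ are conjugate, there is a $k\in K$ such that $k^{-1}T'k=T$, where $T$ is the standard maximal torus used in Section \ref{sec:geometry of mixed states} (diagonal elements in the fixed computational product basis), whose Lie algebra is the $\mathfrak{t}$ of \eqref{eq:root-decomposition-1}. Setting $\tilde\rho=k^{-1}\rho k\in K.\rho$ we obtain $[\mathfrak{t},\tilde\rho]=0$. Repeating verbatim the short computation from the proof of Proposition \ref{symplectic-necessary}---commutation with all traceless diagonal operators of the form $X_1\otimes I_{N_2}+I_{N_1}\otimes X_2$ forces $\tilde\rho$ to be diagonal in the computational product basis---we get $\tilde\rho=\sum_{i,k}p_{ik}E_{ii}\otimes E_{kk}$. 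Hence $\tilde\rho$ is a $CC$ state, and since the class of $CC$ states is $K$-invariant, so is $\rho=k\tilde\rho k^{-1}$.

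There is no serious obstacle here; the real point is conceptual rather than computational. The only subtlety worth flagging is the basis dependence: a $CC$ state is diagonal in \emph{some} product basis, and accordingly the maximal torus one exhibits inside $K_\rho$ depends on that basis. Conjugacy of maximal tori in $K$ makes this harmless, but it is the reason one must pass from the given $\rho$ to a conjugate $\tilde\rho$ in the forward direction before invoking Proposition \ref{symplectic-necessary}.
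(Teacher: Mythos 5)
Your proof is correct and follows essentially the same route as the paper: Hopf--Samelson reduces the question to whether $K_\rho$ contains a maximal torus, and the torus-stabilization condition is translated into the $CC$ condition via the computation of Proposition \ref{symplectic-necessary}. The only difference is that you spell out the conjugacy-of-maximal-tori step and the $K$-invariance of the $CC$ class, which the paper leaves implicit.
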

\begin{proof}
Every maximal torus of $K$ is of the form ${T_{1}\times T_{2}}$, where
$T_{1}$ and $T_{2}$ are maximal tori of ${SU(N_{1})}$ and ${SU(N_{2})}$
respectively. By Proposition \ref{torus}, if $\rho$ is stabilized
by some $T$, then $\rho$ is a $CC$ state. Conversely, straightforward
calculation shows that every $CC$ state is stabilized by some maximal
torus ${T\subset K}$. By the Fact \ref{(Hopf-Samelson-theorem)} we
get that ${\chi\left(K.\rho\right)\neq0}$.\end{proof}
\begin{prop}
Let ${\rho\in\mathcal{O}_{\rho_{0}}}$ be a mixed state of a bipartite
system ${\mathbb{C}^{N_{1}}\otimes\mathbb{C}^{N_{2}}}$. Let $K=SU(N_{1})\times I_{N_{2}}$.
The $K$-orbit through $\rho$, ${K.\rho}$, has the non-vanishing Euler-Poincar\'{e}
characteristic if and only if $\rho$ is a $CQ$ state.\end{prop}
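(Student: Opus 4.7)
The plan is to imitate the proof of Proposition \ref{eurel cc char} for the CC case, applying the Hopf-Samelson theorem (Fact \ref{(Hopf-Samelson-theorem)}) to the group ${K=SU(N_{1})\times I_{N_{2}}}$. Since $K$ is canonically isomorphic to $SU(N_1)$, any maximal torus of $K$ is of the form ${T=T_1\times I_{N_2}}$ where $T_1$ is a maximal torus of $SU(N_1)$, and the Lie algebra of $T$ is precisely the Cartan subalgebra $\mathfrak{t}$ appearing in the root decomposition (\ref{eq:root-decomposition2}) under the representation (\ref{eq:rep2}). The Weyl groups can also be read off: ${|W_K|=N_1!}$, and in the CQ case $|W_{K_\rho}|$ will be a product of factorials accounting for repeated values of the summands $p_i\rho_i$.

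For the direction ``${\chi(K.\rho)\neq 0 \Longrightarrow \rho\in\mathcal{CQ}}$'', I would argue as follows. By Hopf-Samelson, non-vanishing of $\chi(K.\rho)$ forces the stabilizer $K_\rho$ to contain a maximal torus of $K$. Since any two maximal tori of $K$ are conjugate, one can pass to a representative ${\tilde\rho\in K.\rho}$ that is stabilized by the fixed reference torus generated by $\mathfrak{t}$, i.e.\ ${[\mathfrak{t},\tilde\rho]=0}$. This is exactly the condition analyzed in the proof of Proposition \ref{torus Cq} (and Fact \ref{torus}), which showed that commuting with every diagonal traceless matrix acting on the first factor forces $\tilde\rho$ to be of the $CQ$ form $\sum_i p_i \kb{i}{i}\otimes\rho_i$. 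Since the property of being a $CQ$ state is ${SU(N_1)\times SU(N_2)}$-invariant and in particular $K$-invariant, the original $\rho$ is also $CQ$.

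For the converse, I would observe that any $CQ$ state ${\rho=\sum_i p_i \kb{i}{i}\otimes \rho_i}$ is diagonal in the first tensor factor with respect to the basis $\{\ket{i}\}$, so it commutes with every element of the form ${iH\otimes I_{N_2}}$ with $H$ a traceless real diagonal matrix. Thus $\mathfrak{t}\subset \mathfrak{k}_\rho$, and exponentiating gives a maximal torus ${T\subset K_\rho}$; Hopf-Samelson then yields ${\chi(K.\rho)=|W_K|/|W_{K_\rho}|\neq 0}$. The proof is essentially a routine specialization of the argument from Proposition \ref{eurel cc char}; the only point to double-check is that ``$K$ contains some maximal torus $T$ such that $T\subset K_\rho$'' really is equivalent to ``$K_\rho$ contains some maximal torus of $K$'' in the form we need, but this follows immediately from conjugacy of maximal tori. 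No genuine obstacle arises.
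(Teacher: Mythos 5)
Your proposal is correct and follows essentially the same route as the paper, which simply declares this proof ``completely analogous'' to the CC case (Proposition \ref{eurel cc char}): non-vanishing $\chi$ forces, via Hopf--Samelson and conjugacy of maximal tori, a representative commuting with the Cartan subalgebra $\mathfrak{t}$ of $\mathfrak{su}(N_1)\otimes I_{N_2}$, which by the argument of Proposition \ref{torus Cq} is a $CQ$ state, while conversely every $CQ$ state is stabilized by the maximal torus of diagonal unitaries (in its defining basis) tensored with $I_{N_2}$, so Hopf--Samelson gives $\chi\neq 0$. Your filled-in details match what the paper leaves implicit, so no discrepancy or gap.
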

\begin{proof}
The proof is completely analogous to the proof of Proposition \ref{eurel cc char}.
\end{proof}
We next compute the Euler-Poincar\'{e} characteristic. We start with
the case of $CQ$ states as the computation for $CC$ will follow
from the former.
\begin{prop}
\label{euler cq} For a $CQ$ state $\rho$ and ${K=SU(N_{1})\times I_{N_{2}}}$,
under the notation used in Proposition \ref{main CQ}, the Euler characteristic
of ${K.\rho}$ is equal to
\begin{equation}
\chi(K.\rho)=\frac{N_{1}!}{\prod_{\sigma\in\mathcal{Q}}\left|\mathcal{I}_{\sigma}\right|!}.
\end{equation}
\end{prop}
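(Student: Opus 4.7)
The plan is to apply the Hopf-Samelson theorem (Fact \ref{(Hopf-Samelson-theorem)}) directly, so the task reduces to identifying the stabilizer $K_\rho$ of $\rho$ in $K = SU(N_1) \times I_{N_2}$ and computing the orders of the Weyl groups of $K$ and $K_\rho$. Since $K \simeq SU(N_1)$ acts via $U \mapsto U \otimes I_{N_2}$, we immediately get $|W_K| = N_1!$ from the standard computation for $SU(N_1)$ reviewed in Section \ref{sec:topology}. The previous proposition already showed that $K_\rho$ contains a maximal torus of $K$, so Hopf-Samelson applies and we only need $|W_{K_\rho}|$.

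To identify $K_\rho$, rewrite $\rho$ by grouping indices with the same operator $p_i \rho_i$:
\begin{equation}
\rho = \sum_{\sigma \in \mathcal{Q}} P_\sigma \otimes \sigma, \qquad P_\sigma = \sum_{i \in \mathcal{I}_\sigma} \kb{i}{i}.
\end{equation}
Since the operators $\sigma \in \mathcal{Q}$ are pairwise distinct by construction (see \eqref{eq:def Q} and \eqref{eq:def I sigma}) and act on the second tensor factor, the condition $(U \otimes I_{N_2})\rho(U^{-1}\otimes I_{N_2}) = \rho$ forces $U P_\sigma U^{-1} = P_\sigma$ for every $\sigma$. Thus $U$ must preserve each subspace $\mathrm{span}\{\ket{i} : i \in \mathcal{I}_\sigma\}$, and conversely any such $U$ with determinant one stabilizes $\rho$. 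Hence
\begin{equation}
K_\rho \;\simeq\; S\!\left(\prod_{\sigma \in \mathcal{Q}} U(|\mathcal{I}_\sigma|)\right),
\end{equation}
the block-diagonal unitaries of determinant one, where the blocks correspond to the partition $\{\mathcal{I}_\sigma\}_{\sigma \in \mathcal{Q}}$ of $\{1,\ldots,N_1\}$.

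The Weyl group of this stabilizer is obtained from its maximal torus, which can be taken to be the subgroup of diagonal matrices in $K_\rho$ (this is a maximal torus of $K$, in line with the previous proposition). The normalizer modulo the centralizer for $S(\prod_\sigma U(|\mathcal{I}_\sigma|))$ permutes the diagonal entries within each block independently, so
\begin{equation}
|W_{K_\rho}| = \prod_{\sigma \in \mathcal{Q}} |\mathcal{I}_\sigma|!.
\end{equation}
Applying Hopf-Samelson then yields
\[
\chi(K.\rho) = \frac{|W_K|}{|W_{K_\rho}|} = \frac{N_1!}{\prod_{\sigma \in \mathcal{Q}} |\mathcal{I}_\sigma|!},
\]
as claimed. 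The only subtle step is the identification of $K_\rho$: one must use the distinctness of the operators $\sigma \in \mathcal{Q}$ to conclude that $U$ cannot mix different $\mathcal{I}_\sigma$-blocks; everything else is a direct bookkeeping exercise with Weyl groups of unitary groups.
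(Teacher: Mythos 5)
Your proof is correct and follows the same Hopf--Samelson skeleton as the paper ($\chi(K.\rho)=|W_K|/|W_{K_\rho}|$ with $|W_K|=N_1!$), but you identify the stabilizer differently. The paper works infinitesimally: it computes the annihilator of $\rho$ under the adjoint action of $\mathfrak{k}$, using the linear independence of the nonzero $[E_\alpha,\rho]$, obtains $Lie(K_\rho)\approx\bigl(\bigoplus_\sigma\mathfrak{su}(|\mathcal{I}_\sigma|)\bigr)\oplus\mathbb{R}^k$, and then needs the connectedness of $K_\rho$ (proved separately in Appendix 2) to pass from the Lie algebra to $K_\rho\approx\bigl(\prod_\sigma SU(|\mathcal{I}_\sigma|)\bigr)\times\mathbb{T}^k$. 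You instead determine $K_\rho$ directly at the group level as the block special unitary group $S\bigl(\prod_\sigma U(|\mathcal{I}_\sigma|)\bigr)$, which is manifestly connected, so you bypass the appendix entirely; both routes give the same Weyl group order $\prod_\sigma|\mathcal{I}_\sigma|!$. Your version is arguably cleaner for this proposition, though the paper's Lie-algebraic computation is reused verbatim in the $CC$ case, which is why the authors set it up that way.

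One step in your argument deserves a more careful justification: you infer $UP_\sigma U^{-1}=P_\sigma$ from the pairwise distinctness of the operators $\sigma\in\mathcal{Q}$, but distinct operators need not be linearly independent, so one cannot simply ``compare coefficients'' in $\sum_\sigma UP_\sigma U^{-1}\otimes\sigma=\sum_\sigma P_\sigma\otimes\sigma$. The correct (and still elementary) argument is to compare the $(i,j)$ blocks of $[U\otimes I_{N_2},\rho]=0$ with $\rho=\sum_i E_{ii}\otimes\sigma_{(i)}$, where $\sigma_{(i)}=p_i\rho_i$: this gives $U_{ij}\bigl(\sigma_{(i)}-\sigma_{(j)}\bigr)=0$ for all $i,j$, hence $U_{ij}=0$ whenever $i$ and $j$ lie in different sets $\mathcal{I}_\sigma$, which is exactly the block-diagonality you need and uses only pairwise distinctness. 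With that one line added, your identification of $K_\rho$, and hence the whole proof, is complete.
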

\begin{proof}
Due to Fact \ref{(Hopf-Samelson-theorem)} , ${\chi(K.\rho)=\frac{|W(K)|}{|W(K_{\rho})|}}$.
Obviously ${|W(K)|=|W(SU(N_{1}))|=N_{1}!}$, so it suffices to find
${|W(K_{\rho})|}$. Because $\mathrm{Stab}_{K}\rho$ is connected (see
Appendix 2 for the proof), in order to find $K_{\rho}$ it is enough
to find the Lie algebra of the stabilizer, ${Lie\left(K_{\rho}\right)}$.
It is precisely the annihilator of $\rho$ with respect to the adjoint
action of $\mathfrak{k}$. Since all non-zero elements of $\{[E_{\alpha},\rho]\}$
are linearly independent (it follows directly from  (\ref{orbit-dim-1})),
the annihilator of $\rho$ is spanned by all $iH_{\alpha}$ and all
${i(E_{\alpha}+E_{-\alpha})},\,{(E_{\alpha}-E_{-\alpha})}$ for which
${[E_{\alpha},\rho]=0}$. Recall that ${[E_{ij}\otimes I_{N_{2}},\rho]=0}$
if and only if ${i,j\in\mathcal{I}_{\sigma}}$ for some ${\sigma\in\mathcal{Q}}$.
Therefore we have
\begin{equation}
Lie\left(K_{\rho}\right)=\left(\bigoplus_{\sigma\in\mathcal{Q}}\mathrm{Span}_{\mathbb{R}}(E_{ij}\otimes I_{N_{2}},\, H_{ij}\otimes I_{N_{2}}:i,j\in\mathcal{I}_{\sigma})\right)\oplus\tilde{\mathfrak{t}},
\end{equation}
where $\tilde{\mathfrak{t}}$ is spanned by elements of the Cartan
algebra that are not included in the first part of the above expression.
Note that

\[
\left(\bigoplus_{\sigma\in\mathcal{Q}}\mathrm{Span}_{\mathbb{R}}(E_{ij}\otimes I_{N_{1}},\, H_{ij}\otimes I_{N_{2}}:i,j\in\mathcal{I}_{\sigma})\right)\oplus\tilde{\mathfrak{t}}\approx\left(\bigoplus_{\sigma\in\mathcal{Q}}\mathfrak{su}(|\mathcal{I}_{\sigma}|)\right)\oplus\mathbb{R}^{k}\,,
\]
for some $k$. Passing to the Lie group picture we get

\begin{equation}
K_{\rho}\approx\left(\prod_{\sigma\in\mathcal{Q}}SU\left(|\mathcal{I}_{\sigma}|\right)\right)\times\mathbb{T}^{k}\,,\label{eq:stab cq}
\end{equation}
where ${\mathbb{T}^{k}=U(1)^{\times k}}$ is a $k$-dimensional torus.
It follows that
\[
\left|W\left(K_{\rho}\right)\right|=\left|W\left(\prod_{\sigma\in\mathcal{Q}}SU\left(|\mathcal{I}_{\sigma}|\right)\right)\right|=\prod_{\sigma\in\mathcal{Q}}\left|W\left(SU\left(|\mathcal{I}_{\sigma}|\right)\right)\right|=\prod_{\sigma\in\mathcal{Q}}|\mathcal{I}_{\sigma}|!\,.
\]

\end{proof}
\noindent We will use parts of the above reasoning in the computation
of ${\chi(K.\rho)}$ for $CC$ states.
\begin{prop}
\label{euler CC}For a $CC$ state $\rho$ and $ { K=SU(N_{1})\times SU(N_{2}) }$,
under the notation used Proposition \ref{mainCC}, the Euler-Poincar\'{e}
characteristic of ${K.\rho}$ is equal to
\begin{equation}
\chi(K.\rho)=\frac{N_{1}!}{\prod_{R\in\mathcal{R}}\left|\mathcal{I}_{R}\right|!}\frac{N_{2}!}{\prod_{C\in\mathcal{C}}\left|\mathcal{I}_{C}\right|!}.
\end{equation}
\end{prop}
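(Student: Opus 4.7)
The plan is to imitate the structure of the proof of Proposition \ref{euler cq}, exploiting the fact that the group $K=SU(N_1)\times SU(N_2)$ acts on $\mathcal{H}$ through the tensor representation \eqref{eq:rep}, so its action on a $CC$ state $\rho=\sum_{i,j}p_{ij}E_{ii}\otimes E_{jj}$ splits cleanly into an action of $SU(N_1)$ on ``rows'' of the probability matrix $P$ and $SU(N_2)$ on ``columns.'' By Fact \ref{(Hopf-Samelson-theorem)} the whole task reduces to computing $|W_K|$ and $|W_{K_\rho}|$. Since Weyl groups commute with direct products, $|W_K|=|W_{SU(N_1)}|\cdot|W_{SU(N_2)}|=N_1!\,N_2!$, so the only real work is identifying the stabilizer $K_\rho$.

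First I would reduce to the Lie algebra: by the same connectedness argument for $\mathrm{Stab}_K\rho$ sketched for the $CQ$ case (Appendix 2), $K_\rho$ is connected, hence determined by its Lie algebra $\mathfrak{k}_\rho$, which is the annihilator of $\rho$ under the adjoint action. Using the root decomposition \eqref{eq:root-decomposition-1} and the commutator formulas \eqref{eq:p-alpha-0}--\eqref{eq:p-alpha-01}, the nonvanishing brackets $[E_\alpha,\rho]$ are linearly independent, and I would read off
\begin{equation*}
[E_{ij}\otimes I_{N_2},\rho]=0\iff i,j\in\mathcal{I}_R\text{ for some }R\in\mathcal{R},\quad [I_{N_1}\otimes E_{ij},\rho]=0\iff i,j\in\mathcal{J}_C\text{ for some }C\in\mathcal{C}.
\end{equation*}
All of $\mathfrak{t}$ lies in the annihilator by Proposition \ref{symplectic-necessary}. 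Collecting these generators and mimicking the argument for $CQ$ states gives the decomposition
\begin{equation*}
\mathfrak{k}_\rho\cong\Bigl(\bigoplus_{R\in\mathcal{R}}\mathfrak{su}(|\mathcal{I}_R|)\Bigr)\oplus\Bigl(\bigoplus_{C\in\mathcal{C}}\mathfrak{su}(|\mathcal{J}_C|)\Bigr)\oplus\mathbb{R}^k,
\end{equation*}
where the abelian factor $\mathbb{R}^k$ absorbs the leftover Cartan directions that are neither used by any $\mathfrak{su}(|\mathcal{I}_R|)$ nor by any $\mathfrak{su}(|\mathcal{J}_C|)$.

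Integrating, $K_\rho\cong\bigl(\prod_{R\in\mathcal{R}}SU(|\mathcal{I}_R|)\bigr)\times\bigl(\prod_{C\in\mathcal{C}}SU(|\mathcal{J}_C|)\bigr)\times\mathbb{T}^k$. Since a torus has trivial Weyl group and Weyl groups factor over direct products, $|W_{K_\rho}|=\prod_{R\in\mathcal{R}}|\mathcal{I}_R|!\cdot\prod_{C\in\mathcal{C}}|\mathcal{J}_C|!$, and dividing gives the claimed formula. The main subtlety I anticipate is verifying that the two ``local'' contributions really enter $\mathfrak{k}_\rho$ as an internal direct sum rather than overlapping nontrivially: the factors live in orthogonal root subspaces of $\mathfrak{su}(N_1)\oplus\mathfrak{su}(N_2)$ under the representation \eqref{eq:rep}, so the root space decomposition guarantees linear independence, but one has to track carefully how many independent Cartan generators remain after accounting for both the row- and column-block $SU$ factors in order to be sure that these extra toral directions are spurious for the Weyl-group count. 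Once this bookkeeping is done, the conclusion follows from Fact \ref{(Hopf-Samelson-theorem)}.
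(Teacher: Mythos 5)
Your proposal is correct and follows essentially the same route as the paper: Hopf--Samelson plus connectedness of $K_{\rho}$ (Appendix 2), computation of the annihilator of $\rho$ via the vanishing conditions on $[E_{ij}\otimes I_{N_2},\rho]$ and $[I_{N_1}\otimes E_{ij},\rho]$, leading to $K_{\rho}\approx\prod_{R}SU(|\mathcal{I}_R|)\times\prod_{C}SU(|\mathcal{J}_C|)\times\mathbb{T}^k$ and the Weyl-group count. The only cosmetic difference is that you redo the block analysis directly in terms of rows and columns of $P$, whereas the paper reduces to Proposition \ref{euler cq} by treating $\rho$ separately as a $CQ$ and a $QC$ state; the content is identical.
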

\begin{proof}
Due to Fact \ref{(Hopf-Samelson-theorem)}, ${ \chi(K.\rho)=\frac{|W(K)|}{|W(K_{\rho})|} }$.
Obviously $ |W(K)|=|W(SU(N_{1})\times SU(N_{2}))|=N_{1}!N_{2}! $. In
order to compute $|W(K_{\rho})|$ we find $K_{\rho}$. Just like in
the $CQ$ case $K_{\rho}$ turns out to be connected (see Appendix 2
for the proof) and in order to find $K_{\rho}$ it is enough to find
its Lie algebra, that is, the annihilator of $\rho$ with respect
to the adjoint action of $\mathfrak{k}$. We have ${\mathfrak{k}=\mathfrak{su}(N_{1})\oplus\mathfrak{su}(N_{2})}$.
For ${X\in\mathfrak{su}(N_{1})}$ and ${Y\in\mathfrak{su}(N_{2})}$ the
non-zero elements of the form ${\left[X\otimes I_{N_{2}},\,\rho\right]}$
and ${\left[I_{N_{1}}\otimes Y,\,\rho\right]}$ are linearly independent
(see (\ref{eq:p-alpha-0}) and (\ref{eq:p-alpha-01})). Therefore,
the annihilator of $\rho$ with respect to the adjoint action of $\mathfrak{k}$
is a direct sum of annihilators with respect to actions of $\mathfrak{su}(N_{1})$
and $\mathfrak{su}(N_{2})$ taken separately:

\begin{eqnarray}
\mathfrak{su}(N_{1}) & \ni & X\rightarrow[X\otimes I_{N_{2}},\,\rho]\,,\label{eq:rep left}\\
\mathfrak{su}(N_{2}) & \ni & Y\rightarrow[I_{N_{1}}\otimes Y,\,\rho]\,.\label{eq:rep right}
\end{eqnarray}
From the perspective of the representations (\ref{eq:rep left}) and
(\ref{eq:rep right}) the state $\rho$ can be considered separately
as $CQ$ or $QC$ state. The definition of a $QC$ state is analogous
to the definition of $CQ$ state. Annihilators of $\rho$ with respect
to the action of $\mathfrak{su}(N_{1})$ or $\mathfrak{su}(N_{2})$
are thus annihilators of $\rho$ treated as a $CQ$ or $QC$ state.
Let $\mathcal{Q}^{(1)}$ and $I_{\sigma}^{(1)}$ be sets $\mathcal{Q}$
and $\mathcal{I}_{\sigma}$ (see  (\ref{eq:def Q}) and  (\ref{eq:def I sigma})
) when $\rho$ is treated as a $CQ$ state. Analogously, let $\mathcal{Q}^{(2)}$
and $I_{\sigma}^{(2)}$ be sets $\mathcal{Q}$ and $\mathcal{I}_{\sigma}$
when $\rho$ is treated as a $QC$ state. Repeating the reasoning
from the proof of Proposition \ref{euler cq} we get (see  (\ref{eq:stab cq}))

\[
K_{\rho}\approx\left(\prod_{\sigma\in\mathcal{Q}^{(1)}}SU\left(|\mathcal{I}_{\sigma}^{(1)}|\right)\right)\times\left(\prod_{\sigma\in\mathcal{Q}^{(2)}}SU\left(|\mathcal{I}_{\sigma}^{(2)}|\right)\right)\times\mathbb{T}^{k}\,,
\]
for some $k$. Therefore we have

\begin{equation}
\left|W\left(K_{\rho}\right)\right|=\left|W\left(\left(\prod_{\sigma\in\mathcal{Q}^{(1)}}SU\left(|\mathcal{I}_{\sigma}^{(1)}|\right)\right)\times\left(\prod_{\sigma\in\mathcal{Q}^{(2)}}SU\left(|\mathcal{I}_{\sigma}^{(2)}|\right)\right)\right)\right|\label{eq:final}
\end{equation}

\[
=\left(\prod_{\sigma\in\mathcal{Q}^{(1)}}|\mathcal{I}_{\sigma}^{(1)}|!\right)\cdot\left(\prod_{\sigma\in\mathcal{Q}^{(2)}}|\mathcal{I}_{\sigma}^{(2)}|!\right).
\]
We now consider different presentations of a $CC$ state ${\rho=\sum_{i,j}\,p_{ij}E_{ii}\otimes E_{jj}}$
that are suitable when it is treated as a $CQ$ or $QC$ state. We
define the marginal distributions ${p_{i}^{(1)}=\sum_{j}\,p_{ij}}$ and
${p_{j}^{(2)}=\sum_{i}\,p_{ij}}$. We have

\begin{equation}
\rho=\sum_{i}p_{i}^{(1)}E_{ii}\otimes\sigma_{i}^{(2)}\,,\label{eq:cq pres}
\end{equation}
where ${\sigma_{i}^{(2)}=\sum_{j}\,\frac{p_{ij}}{p_{i}^{(1)}}E_{jj}}$.
Expression (\ref{eq:cq pres}) is useful when $\rho$ is treated as
a $CQ$ state. Analogously we have

\begin{equation}
\rho=\sum_{j}p_{j}^{(2)}\sigma_{j}^{(1)}\otimes E_{jj}\,,\label{eq:qc pres}
\end{equation}
where ${\sigma_{j}^{(1)}=\sum_{i}\,\frac{p_{ij}}{p_{j}^{(2)}}E_{ii}}$.
On the other hand, expression (\ref{eq:cq pres}) is useful when $\rho$
is treated as a $QC$ state.  Closer examination of (\ref{eq:cq pres})
shows that the set $\mathcal{Q}^{(1)}$ is in the bijection with the
$\mathcal{C}$. That is each ${\sigma\in\mathcal{Q}^{(1)}}$ correspond
to the unique ${C\in\mathcal{C}}$. For such a pair we have ${\left|\mathcal{I}_{\sigma}^{(1)}\right|=\left|\mathcal{J}_{C}\right|}$.
Similarly, we have the bijection between $\mathcal{Q}^{(2)}$ and
$\mathcal{R}$. Each $\sigma\in\mathcal{Q}^{(2)}$ correspond to the
unique $R\in\mathcal{R}$ and we have ${\left|\mathcal{I}_{\sigma}^{(2)}\right|=\left|\mathcal{I}_{R}\right|}$.
These two observations together with (\ref{eq:final}) conclude the
proof.
\end{proof}

\section{Summary and outlook \label{sec:Concluding-remarks-and}}

We have showed that geometric and topological methods can be applied
to distinguish interesting classes of mixed states of composite quantum
systems. On the geometrical side we proved that for bipartite system
${\mathbb{C}^{N_{1}}\otimes\mathbb{C}^{N_{2}}}$ the generic $CC$ states
are distinguished as symplectic orbits of ${SU(N_{1})\times SU(N_{2})}$
in the manifold of isospectral density matrices. Similarly, the generic
$CQ$ states are distinguished as symplectic orbits of ${SU(N_{1})\times I_{N_{2}}}$
in the same manifold. On the topological side we studied the Euler-Poincar\'{e}
characteristic of orbits of relevant groups through arbitrary multipartite
pure, $CC$, and $CQ$ states. We proved that non-zero Euler-Poincar\'{e}
characteristic of orbits of the local unitary group through pure multipartite
and bipartite mixed states characterizes pure separable and $CC$
states. Analogously, non-vanishing Euler-Poincar\'{e} characteristic
of orbits ${SU(N_{1})\times I_{N_{2}}}$ on bipartite mixed states detects
$CQ$ states. Above results can be easily generalized to mixed states
of multipartite systems. For example in the tripartite case, geometric
and topological features of the orbits of the suitably chosen groups
should distinguish classes of $CCC$ or $CCQ$ states. Another interesting
generalization would involve the usage of the same geometric and topological
methods to study mixed states of fermionic and bosonic systems.

\section{Acknowledgments}

We would like to thank Marek Ku\'{s} and Jaros\l{}aw Korbicz for fruitful
discussions. {We are grateful to the anonymous reviewer for pointing out mistakes in our original treatment of  the problem of induced K\"{a}hler structures}.  We acknowledge the support of Polish MNiSW
Iuventus Plus grant no. IP2011048471,SFB/TR12 Symmetries and Universality
in Mesoscopic Systems program of the Deutsche Forschungsgemeischaft
and a grant of the Polish National Science Centre under the contract
number DEC-2011/01/M/ST2/00379 .

\section{Appendix 1: Complex and almost-complex structures}
{
In this Appendix  we show that the almost integrable structure on the symplectic orbit ${K.\rho \subset \mathcal{O}_{\rho_0}}$ introduced in the proof of Proposition \ref{complex orbits} is integrable. In order to do it we recall (see [\onlinecite{nakahara}] for a more detailed treatment of complex and almost complex structures) one of the equivalent definitions of integrability of the complex structure $J$. Let ${T\mathcal{M}^{\mathbb{C}}}$ be the complexified tangent boundle of $\mathcal{M}$. For each ${x\in\mathcal{M}}$ the complex structure $J$ defines the decomposition of ${T_x \mathcal{M}^{\mathbb{C}}}$,
\begin{equation}\label{splitting}
T_x \mathcal{M}^{\mathbb{C}}=T^{+}_x \mathcal{M}^{\mathbb{C}} \oplus T^{-}_x \mathcal{M}^{\mathbb{C}}\ ,
\end{equation}
where
\[
T^{+}_x = \mathcal{M}^{\mathbb{C}}=\mathrm{ker}\left(J_x -i \mathbb{I}_x \right)\ ,\ T^{-}_x= \mathcal{M}^{\mathbb{C}}=\mathrm{ker}\left(J_x +i \mathbb{I}_x \right) \ .
\]
In the above expression $\mathbb{I}_x$ denotes the identity operator on the complexified tangent space at $x$, ${T_x \mathcal{M}^{\mathbb{C}}}$. Vectors belonging to ${T^{+}_x \mathcal{M}^{\mathbb{C}}}$ and ${T^{-}_x \mathcal{M}^{\mathbb{C}}}$ are called holonomic and respectively anti-holonomic. Introducting the notation.
\[
T^{+} \mathcal{M}^{\mathbb{C}}=\bigcup_{x\in\mathcal{M}} T^{+}_x \mathcal{M}\ ,\ T^{-} \mathcal{M}^{\mathbb{C}}=\bigcup_{x\in\mathcal{M}} T^{-}_x \mathcal{M} \ ,
\]
 we get the decomposition of the complexified tangent bundle,
\begin{equation}
T\mathcal{M}^{\mathbb{C}}= T^{+} \mathcal{M}^{\mathbb{C}} \oplus T^{-} \mathcal{M}^{\mathbb{C}}\ .
\end{equation}
Sections of bundles ${T^{+} \mathcal{M}^{\mathbb{C}}}$ and ${T^{-} \mathcal{M}^{\mathbb{C}}}$ are called holonomic and respectively anti-holonomic vector fields. We denote the (complex) vector spaces of holonomic and anti-holonomic vector fields by ${\mathcal{X}\left(\mathcal{M}\right)^+}$ and ${\mathcal{X}\left(\mathcal{M}\right)^-}$ respectively. We are now ready to give the definition of integrability of the almost complex structure $J$. Almost complex structure $J$ is integrable\cite{nakahara} if and only if for every ${V,W\in\mathcal{X}\left(\mathcal{M}\right)^+}$ we have
\begin{equation}\label{integrab}
\left[V,\ W\right]_{\mathrm{Lie}}\in\mathcal{X}\left(\mathcal{M}\right)^+ \ ,
\end{equation}
where ${\left[\cdot,\ \cdot\right]_{\mathrm{Lie}}}$ denotes the Lie bracket of vector fields.
We now apply the notions introduced above to prove the integrability of the complex structure introduced in Proposition \ref{complex orbits}. We have ${\mathcal{M}=K.\rho}$. We assume ${K.\rho}$ is symplectic i.e. that it inherits the symplectic structure $\omega$ from $\mathcal{O}_{\rho_0}$. As before, the symplectic structure on ${K.\rho}$ is denoted by $\left.\omega\right|_{K.\rho}$.
By $\left.J\right|_{K.\rho}$ we denote the associated almost complex structure. Because symplectic structure $\left.\omega\right|_{K.\rho}$ is $K$ invariant is is enough the check the integrability of the associated almost complex structure in the neighbourhood of $\tilde{\rho}$ such that ${\left[\tilde{\rho},\ X\right]}$ for ${X\in\mathfrak{t}}$. Let $\mathcal{U}$ denote such a neighbourhood.  Because ${K.\rho}$ is symplectic for ${x=k\tilde{\rho}k^{-1}\in\mathcal{U}}$, ${k\in K}$  we have the decomposition
\begin{equation}
T_{x}K.\rho= \bigoplus_{\alpha\in\mathcal{A}}P_\alpha \left(x\right) \ \label{trivialisation},
\end{equation}
where
\[
P_\alpha \left(x\right)=\mathrm{Span}_{\mathbb{R}}\left( k\left [E_{\alpha}-E_{-\alpha},\,\tilde{\rho}\right] k^{-1},\, i k\left[E_{\alpha}-E_{-\alpha},\,\tilde{\rho}\right] k^{-1}\right)\ ,\label{eq:decop p(x)}
\]
and
\[
\mathcal{A}=\left\{ \alpha \middle|\,\alpha>0\ ,\mathrm{tr}\left(\tilde{\rho}H_{\alpha}\right)\neq0\right\} \ .
\]

Direct computation (see \eqref{eq:adrho}) shows that the decomposition of the complexified tangent space at $x$,${TK.\rho^\mathbb{C}}$  reads as
\[
T_x^+K.\rho^\mathbb{C}=  \mathrm{Span}_{\mathbb{C}}\left( k\left [E_{\alpha},\,\tilde{\rho}\right] k^{-1} |\  \alpha\in\mathcal{A}\right)  ,\ T_x^-K.\rho^\mathbb{C}= \mathrm{Span}_{\mathbb{C}}\left( k\left [E_{-\alpha},\,\tilde{\rho}\right] k^{-1} |\  \alpha\in\mathcal{A}\right)  \ .
\]
We observe that \eqref{trivialisation} gives the local trivialistation of ${T\mathcal{U}^\mathbb{C}}$, ${TU^{\mathbb{C}}\approx \mathcal{U} \times \mathfrak{k}^\mathbb{C}}$. Using this and the commutation relations in the Lie algebra $\mathfrak{k}^\mathbb{C}$  we conclude that for the above defined almost complex structure condition \eqref{integrab} is satisfied and therefore it is integrable.

}
\section*{Appendix 2: Stabilizes of orbits through CC and CQ states}

In this part we show that the stabilizers of the action of the relevant
groups through $CC$ and $CQ$ states are connected. This result was
used in the proofs of Propositions \ref{euler cq} and \ref{euler CC}.
Let us first recall that that a (not necessary maximal) torus $T$
of a Lie group $G$ is defined as a compact connected abelian subgroup
of $G$. We have the following fact.
\begin{fact}
\cite{Compact groups} \label{stab tori}Let ${T\subset K}$ be a torus
in a connected compact Lie group $K$. Let
\begin{equation}
C\left(T\right)=\left\{ k\in K \middle|kt=tk,\, t\in T\right\},
\end{equation}
be a centralizer of $T$ in $K$. Then the centralizer ${C\left(T\right)}$
is a connected compact subgroup of $K.$
\end{fact}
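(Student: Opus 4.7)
The plan is to show that every $z\in C(T)$ lies in the identity component $C(T)_0$; since $C(T)_0$ is connected by definition, this proves the fact. My approach combines the conjugacy of maximal tori in a compact connected Lie group with a standard structural result on Weyl-group stabilizers of subtori.

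First I would extend $T$ to a maximal torus $T^\star$ of $K$. As a connected abelian subgroup containing $T$, it lies entirely in $C(T)_0$. For a given $z\in C(T)$, the conjugate $zT^\star z^{-1}$ is again a maximal torus of $K$, and a direct computation — using $z^{-1}sz=s$ for all $s\in T$ together with the commutativity of $T^\star$ — shows that $zT^\star z^{-1}\subset C(T)$; being connected and containing the identity, $zT^\star z^{-1}\subset C(T)_0$. Thus $T^\star$ and $zT^\star z^{-1}$ are both maximal tori of the compact connected Lie group $C(T)_0$, and by applying the conjugacy-of-maximal-tori theorem inside $C(T)_0$ I would find $h\in C(T)_0$ with $hT^\star h^{-1}=zT^\star z^{-1}$. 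Setting $n:=h^{-1}z$ yields $n\in N_K(T^\star)\cap C(T)$, and the fact will follow once $n\in C(T)_0$ is established, for then $z=hn\in C(T)_0$.

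The main obstacle lies in this last step. The class $\bar n\in W(K,T^\star):=N_K(T^\star)/T^\star$ acts trivially on $T\subset T^\star$ by conjugation, and I would need to show that any such class actually lies in the subgroup $W(C(T)_0,T^\star):=N_{C(T)_0}(T^\star)/T^\star$. This is a classical theorem in the theory of compact Lie groups (due, in essence, to Steinberg in the reflection-group setting): the pointwise stabilizer in $W(K,T^\star)$ of the subspace $\mathrm{Lie}(T)\subset\mathrm{Lie}(T^\star)$ is generated by the reflections associated with roots of $(K,T^\star)$ that vanish on $\mathrm{Lie}(T)$, and these are precisely the generators of $W(C(T)_0,T^\star)$. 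Granted this input, a representative of $\bar n$ in $N_K(T^\star)$ may be chosen to differ from an element of $N_{C(T)_0}(T^\star)\subset C(T)_0$ by a factor in $T^\star\subset C(T)_0$, forcing $n\in C(T)_0$ and finishing the proof.
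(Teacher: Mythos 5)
This statement is one of the paper's ``Facts'': the authors explicitly do not prove it, but quote it from the cited reference (Hofmann--Morris), so there is no in-paper proof to match your argument against. Your argument itself is sound, but it takes a much heavier route than the standard one found in the literature. The skeleton is fine: $zT^{\star}z^{-1}\subseteq C(T)$ follows from the computation you indicate, both $T^{\star}$ and $zT^{\star}z^{-1}$ are maximal tori of the compact connected group $C(T)_{0}$, conjugacy inside $C(T)_{0}$ produces $h$, and $n=h^{-1}z$ normalizes $T^{\star}$ and centralizes $T$, hence $\mathrm{Ad}(n)$ fixes $\mathrm{Lie}(T)$ pointwise. The genuine content is then exactly where you place it: you need (i) Steinberg's theorem for finite reflection groups, that the pointwise stabilizer of a subset of $\mathrm{Lie}(T^{\star})$ is generated by the reflections fixing it pointwise; (ii) the identification of the roots of $C(T)_{0}$ relative to $T^{\star}$ with the roots of $K$ vanishing on $\mathrm{Lie}(T)$ (this uses that a character of $T^{\star}$ trivial on the connected subgroup $T$ must have vanishing differential on $\mathrm{Lie}(T)$); and (iii) faithfulness of the $W(K,T^{\star})$-action on $\mathrm{Lie}(T^{\star})$, i.e.\ $C(T^{\star})=T^{\star}$, so that membership of the image of $\bar n$ in the reflection subgroup really forces $\bar n\in W(C(T)_{0},T^{\star})$. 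If you cite the purely reflection-group version of (i) there is no circularity, but be aware that in many textbooks the stabilizer theorem for Weyl groups is deduced \emph{from} connectedness of centralizers of tori, so the citation must be chosen carefully. By contrast, the standard proof (essentially the one in the cited monograph) is much shorter and avoids Weyl groups entirely: for $z\in C(T)$ the closed subgroup $A=\overline{\langle T,z\rangle}$ is a compact abelian Lie group whose component group is cyclic, hence $A$ has a topological generator $a$; any maximal torus $S$ containing $a$ centralizes $A\supseteq T$, so $S\subseteq C(T)$, while $z\in A=\overline{\langle a\rangle}\subseteq S$, exhibiting $z$ in a connected subset of $C(T)$ through the identity. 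Your approach buys a more structural description (it effectively identifies $C(T)$ with $C(T)_{0}$ via Weyl-group data), at the cost of invoking machinery at least as deep as the statement itself; the monothetic argument is the more elementary and self-contained route.
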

\noindent  We are now ready to present proofs.
\begin{prop}
\label{connected cc}Let $\rho$ be a $CC$ state of a bipartite system
${\mathbb{C}^{N_{1}}\otimes\mathbb{C}^{N_{2}}}$. Let ${K=SU(N_{1})\times SU(N_{2})}$.
Then the stabilizer of $\rho$, ${K_{\rho}=\left\{ k\in K \middle|\, k.\rho=\rho\right\}} $
, is a connected subgroup of $K$.\end{prop}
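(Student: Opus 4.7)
The plan is to exhibit $K_\rho$ as the centralizer in $K$ of a suitable subtorus and then apply Fact \ref{stab tori} to conclude connectedness. The key input is the explicit identification of $\mathrm{Lie}(K_\rho)$ already carried out in the proof of Proposition \ref{euler CC}.

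First, I observe that since $\rho=\sum_{i,j}p_{ij}E_{ii}\otimes E_{jj}$ is diagonal in the computational basis, the maximal torus $T=T_1\times T_2\subset K$ of diagonal matrices in $SU(N_1)\times SU(N_2)$ is contained in $K_\rho$. In particular, $K_\rho$ has maximal rank in $K$. From the proof of Proposition \ref{euler CC}, the Lie algebra $\mathrm{Lie}(K_\rho)$ decomposes as
\[
\mathrm{Lie}(K_\rho)\cong\bigoplus_{R\in\mathcal{R}}\mathfrak{su}(|\mathcal{I}_R|)\oplus\bigoplus_{C\in\mathcal{C}}\mathfrak{su}(|\mathcal{J}_C|)\oplus\mathbb{R}^{k},
\]
whose associated connected Lie subgroup of $K$ I will denote by $H$. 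The inclusion $H\subseteq K_\rho$ is automatic; the substance of the proposition lies in the reverse inclusion.

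Second, I would restrict the form of an arbitrary element of $K_\rho$. The partial-trace identity $\mathrm{Tr}_2[(k_1\otimes k_2)\rho(k_1\otimes k_2)^{-1}]=k_1(\mathrm{Tr}_2\rho)k_1^{-1}$ implies that any $(k_1,k_2)\in K_\rho$ commutes with the diagonal reduced state $\rho_1=\mathrm{Tr}_2\rho=\sum_i S(R_i)E_{ii}$. Hence $k_1$ is forced into the block-diagonal \emph{connected} subgroup $H_1:=S\bigl(\prod_{r\in\mathcal{SR}}U(|\mathcal{I}_r|)\bigr)\subset SU(N_1)$; symmetrically, $k_2$ lies in the analogous connected subgroup $H_2\subset SU(N_2)$. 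Thus $K_\rho\subseteq H_1\times H_2$.

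Third, I would realize $H$ as the centralizer of a subtorus. Choose $\xi\in\mathfrak{t}$ whose root-values $\alpha(\xi)$ vanish precisely on those positive roots $\alpha$ of $\mathfrak{k}$ for which $[E_\alpha,\rho]=0$; the existence of such $\xi$ follows from the explicit descriptions (\ref{eq:p-alpha-0})--(\ref{eq:p-alpha-01}) together with Fact \ref{counting-1}. Set $S=\overline{\exp(\mathbb{R}\xi)}\subseteq T$. A direct computation using the root decomposition (\ref{eq:root-decomposition-1}) yields $C_K(S)=H$, which is connected by Fact \ref{stab tori}. It then remains to verify $K_\rho\subseteq C_K(S)$: within $H_1\times H_2$ obtained in step two, the residual constraint $[\rho,k_1\otimes k_2]=0$ reduces, by the same root-theoretic bookkeeping as in Proposition \ref{mainCC}, to commutation with the one-parameter subgroup generated by $\xi$, giving the needed inclusion. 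Combining the inclusions yields $K_\rho=H$, which is connected.

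The main obstacle is precisely the final verification $K_\rho\subseteq C_K(S)$: a priori, an element of $K_\rho$ need only commute with $\rho$, not with the chosen generator $\xi$, and the worry is that ``tensor-permutation'' elements which permute rows and columns of $(p_{ij})$ preserving $\rho$ could give components of $K_\rho$ outside $H$. The way I would rule these out is to observe that any such element necessarily normalizes $T$ (by the rank-maximality argument of step one) and therefore descends to an element of $N_K(T)/T=W_K$ fixing the diagonal matrix $\rho$; a combinatorial analysis of the matrix $(p_{ij})$ in terms of the equivalence classes $\mathcal{R}$ and $\mathcal{C}$ then identifies this Weyl-type element with one already realized inside $H$, closing the argument.
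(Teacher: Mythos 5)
Your strategy (exhibit $K_\rho$ as the centralizer of a torus in $K$ and invoke Fact \ref{stab tori}) is in the same spirit as the paper's proof, and your first two steps are fine; but the step you yourself single out as the main obstacle, the inclusion $K_\rho\subseteq C_K(S)$, is a genuine gap, and your proposed repair does not work. First, commuting with $\rho$ does not force an element to normalize $T$: maximal rank of $K_\rho$ only guarantees that each connected component of $K_\rho$ contains \emph{some} representative in $N_K(T)$ (e.g.\ when two rows of $(p_{ij})$ coincide, $K_\rho$ contains a whole $SU(2)$-block, most of whose elements do not normalize the diagonal torus). Second, and fatally, the resulting Weyl-type representative need \emph{not} be realized inside $H$, so the final "combinatorial identification" cannot be carried out in general.

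Concretely, take $N_1=N_2=2$ and $P=\left(\begin{smallmatrix}\alpha & \tfrac12-\alpha\\ \tfrac12-\alpha & \alpha\end{smallmatrix}\right)$ with $\alpha\neq\tfrac14$ (the maximally degenerate case of the paper's two-qubit discussion). All rows are pairwise distinct and all columns are pairwise distinct, so $\mathrm{Lie}(K_\rho)=\mathfrak{t}$ and your $H$ is just the diagonal torus $T$. Yet for $u=\left(\begin{smallmatrix}0&1\\-1&0\end{smallmatrix}\right)\in SU(2)$ the element $(u,u)$ conjugates $E_{11}\otimes E_{11}\leftrightarrow E_{22}\otimes E_{22}$ and $E_{11}\otimes E_{22}\leftrightarrow E_{22}\otimes E_{11}$, hence fixes $\rho$, while it neither lies in $T$ nor commutes with your $\xi$ (whose diagonal entries are distinct because the rows are distinct). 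Thus $K_\rho\not\subseteq C_K(S)$, $K_\rho\neq H$, and in fact $K_\rho=T\cup(u,u)T$ is disconnected: the "tensor-permutation" elements you worried about really do create extra components whenever $(p_{ij})$ admits a nontrivial simultaneous row/column permutation symmetry that is not generated by equal rows or equal columns. So no bookkeeping can close your last step for arbitrary $CC$ states; the argument goes through only after excluding such symmetric states (for instance when all $p_{ij}$ are distinct one gets $K_\rho=T=H$ directly). Note that this is not an artifact of your route: the same element $(u,u)$ centralizes the torus of eigenspace phases used in the paper's own proof, which in this example is not contained in $K$, so Fact \ref{stab tori} does not apply there either, and the connectedness claim itself fails for this state (with corresponding corrections needed in the Euler-characteristic count of Proposition \ref{euler CC}).
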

\begin{proof}
Any $CC$ state can be written in the form

\begin{equation}
\rho=\sum_{l}p_{l}\mathbb{P}_{l},\label{eq:cc eigen}
\end{equation}
where summation is over index labeling distinct eigenvalues $p_{l}$
and $\mathbb{P}_{k}$ are projectors onto eigenspaces of $\rho$.
Note that $\mathbb{P}_{l}$ are formed from rank 1 projectors onto
separable tensors. Therefore, to the decomposition (\ref{eq:cc eigen})
we can associate a unique torus ${T\subset K}$

\[
T=\left\{ \sum_{l}e^{i\phi_{l}}\mathbb{P}_{l}\middle|\sum_{l}\phi_{l}=0\right\} .
\]
One easily checks that ${C(T)=K_{\rho}}$ and therefore, by Fact \ref{stab tori},
$K_{\rho}$ is connected.\end{proof}
\begin{prop}
Let $\rho$ be a $CQ$ state of a bipartite system ${\mathbb{C}^{N_{1}}\otimes\mathbb{C}^{N_{2}}}$.
Let ${K=SU(N_{1})\times I_{N_{2}}}$. Then the stabilizer of $\rho$,
${K_{\rho}=\left\{ k\in K \middle|\, k.\rho=\rho\right\}} $ , is a connected
subgroup of $K$.\end{prop}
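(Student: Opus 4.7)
The plan is to mirror the proof of Proposition \ref{connected cc} for the CC case: I will identify $K_\rho$ with the centralizer $C_K(T)$ of a suitably chosen torus $T\subset K$, and then invoke Fact \ref{stab tori} to conclude connectedness. Using the notation of Proposition \ref{main CQ}, I write $\rho=\sum_{\sigma\in\mathcal{Q}}\mathbb{P}_\sigma\otimes\sigma$, where $\mathbb{P}_\sigma=\sum_{i\in\mathcal{I}_\sigma}\kb{i}{i}$ are the mutually orthogonal projectors on $\mathcal{H}_A$ associated to equal values of $p_i\rho_i$. The torus to use is
\[
T=\left\{\Big(\sum_\sigma e^{i\phi_\sigma}\mathbb{P}_\sigma\Big)\otimes I_{N_2}\ :\ \sum_\sigma|\mathcal{I}_\sigma|\phi_\sigma\in 2\pi\mathbb{Z}\right\},
\]
which is a compact connected abelian subgroup of $K$. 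Every element of $T$ commutes with $\rho$, so $T\subseteq K_\rho$.

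One inclusion of the equality $C_K(T)=K_\rho$ is easy: if $k\otimes I_{N_2}\in C_K(T)$, then by varying $\phi_\sigma$ one concludes that $k$ commutes with each $\mathbb{P}_\sigma$, and therefore $(k\otimes I_{N_2})\rho(k^\dagger\otimes I_{N_2})=\sum_\sigma k\mathbb{P}_\sigma k^\dagger\otimes\sigma=\rho$. The opposite inclusion $K_\rho\subseteq C_K(T)$ is the crux, and amounts to showing that every $k\otimes I_{N_2}\in K_\rho$ commutes individually with each $\mathbb{P}_\sigma$. I will use a partial-trace trick: multiplying the stabilizer identity $(k\otimes I_{N_2})\rho(k^\dagger\otimes I_{N_2})=\rho$ on the right by $I_{N_1}\otimes F$ for arbitrary $F\in B(\mathcal{H}_B)$ and then applying $\mathrm{tr}_B$ yields
\[
k\Big(\sum_\sigma\mathrm{tr}(\sigma F)\mathbb{P}_\sigma\Big)k^\dagger=\sum_\sigma\mathrm{tr}(\sigma F)\mathbb{P}_\sigma,
\]
so $k$ commutes with every element of the unital subalgebra $\mathcal{A}\subset B(\mathcal{H}_A)$ generated by the operators $\sum_\sigma\mathrm{tr}(\sigma F)\mathbb{P}_\sigma$ as $F$ ranges over $B(\mathcal{H}_B)$.

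The main obstacle will be to show that $\mathcal{A}=\bigoplus_\sigma\mathbb{C}\mathbb{P}_\sigma$, since this delivers $\mathbb{P}_\sigma\in\mathcal{A}$ and hence $[k,\mathbb{P}_\sigma]=0$ for each $\sigma$. The naive attempt via linear independence of the $\sigma$'s in $B(\mathcal{H}_B)$ is doomed, because distinct $\sigma\in\mathcal{Q}$ can be linearly dependent (for instance $\sigma_3=(\sigma_1+\sigma_2)/2$); the argument must therefore exploit closure of $\mathcal{A}$ under products rather than under linear combinations alone. Since distinct $\sigma$'s induce distinct linear functionals $F\mapsto\mathrm{tr}(\sigma F)$, the family of functions $\sigma\mapsto\mathrm{tr}(\sigma F)$ separates the points of the finite set $\mathcal{Q}$; a Lagrange-interpolation argument (equivalently, Stone--Weierstrass on the finite discrete space $\mathcal{Q}$) then produces polynomials in these functions which take the value $1$ at any single $\sigma\in\mathcal{Q}$ and $0$ on the remaining points, and, via the correspondence $\sigma\mapsto\mathbb{P}_\sigma$, these polynomials realize the projectors $\mathbb{P}_\sigma$ inside $\mathcal{A}$. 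Once $C_K(T)=K_\rho$ is in place, Fact \ref{stab tori} immediately yields that $K_\rho$ is connected.
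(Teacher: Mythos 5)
Your proof is correct and takes essentially the same route as the paper's: you identify $K_{\rho}$ with the centralizer of a torus built from the projectors $\mathbb{P}_{\sigma}$ and invoke Fact \ref{stab tori}, your partial-trace/interpolation argument simply making explicit the inclusion $K_{\rho}\subseteq C(T)$ which the paper covers by asserting that stabilizing elements preserve the eigenspace structure of $\rho$. One small caveat: with your normalization $\sum_{\sigma}\left|\mathcal{I}_{\sigma}\right|\phi_{\sigma}\in2\pi\mathbb{Z}$ the set $T$ is the kernel of a character of the diagonal torus and is disconnected whenever the numbers $\left|\mathcal{I}_{\sigma}\right|$ have a common divisor greater than one, so one should pass to its identity component (a genuine torus), for which both inclusions of your argument go through unchanged.
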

\begin{proof}
Any $CQ$ state can be written in the form ( see (\ref{eq:def Q}))

\begin{equation}
\rho=\sum_{\sigma\in\mathcal{Q}}\mathbb{P}_{\sigma}\otimes\sigma,
\end{equation}
where ${\mathbb{P}_{\sigma}=\sum_{i\in\mathcal{I}_{\sigma}}\kb ii}$.
Using the fact that ${k\in SU(N_{1})\times I_{N_{2}}}$ stabilizes $\rho$
if and only if $k$ preserves eigenspaces of $\rho$ and repeating
the argument from the proof of Proposition \ref{connected cc} we
get that ${K_{\rho}=C(T)}$ where $T$ is a torus in $K$ given by

\[
T=\left\{ \sum_{\sigma\in\mathcal{Q}}e^{i\phi_{\sigma}}\mathbb{P}_{\sigma}\otimes I_{N_{2}} \middle|\sum_{\sigma\in\mathcal{Q}}\phi_{\sigma}=0\right\} .
\]
Hence, by Fact \ref{stab tori}, $K_{\rho}$ is connected.
\end{proof}

\end{document}